\newcommand{\com}[1]{{\color{red}\textbf{Comment}:#1}}
\newcommand{\com}[1]{}
\newtheorem{definition}{Definition}
\newtheorem{theorem}{Theorem}
\newtheorem{assumption}{Assumption}
\begin{document}
%

\title{Multi-Scale Games: Representing and Solving Games on Networks \\
with Group Structure}
\author{Kun Jin\\
{University of Michigan, Ann Arbor}\\
kunj@umich.edu
\And
Yevgeniy Vorobeychik\\
{Washington University in St. Louis}\\
yvorobeychik@wustl.edu
\And
Mingyan Liu\\
{University of Michigan, Ann Arbor}\\
mingyan@umich.edu}
\maketitle
\begin{abstract}
Network games provide a natural machinery to compactly represent strategic interactions among agents whose payoffs exhibit sparsity in their dependence on the actions of others.
Besides encoding interaction sparsity, however, real networks often exhibit a multi-scale structure, in which agents can be grouped into communities, those communities further grouped, and so on, and where interactions among such groups may also exhibit sparsity.
We present a general model of multi-scale network games that encodes such multi-level structure.
We then develop several algorithmic approaches that leverage this multi-scale structure, and derive sufficient conditions for convergence of these to a Nash equilibrium.
Our numerical experiments demonstrate that the proposed approaches enable orders of magnitude improvements in scalability when computing Nash equilibria in such games. For example, we can solve previously intractable instances involving up to 1 million agents in under 15 minutes.
\end{abstract}

\section{Introduction}

Strategic interactions among interconnected agents are commonly modeled using the network, or graphical, game formalism~\citep{Kearns01,jackson2015games}.
In such games, the utility of an agent depends on his own actions as well as those by its network neighbors.
Many variations of games on networks have been considered, with applications including the provision of public goods~\citep{allouch2015private,buckley2006income,khalili2019public,Yu20},  security~\citep{hota2018interdependent,la2016interdependent,Vorobeychik15}, and financial markets~\citep{acemoglu2012network}.

\begin{figure}[htb] 
\centering 
\includegraphics[width=0.45\textwidth]{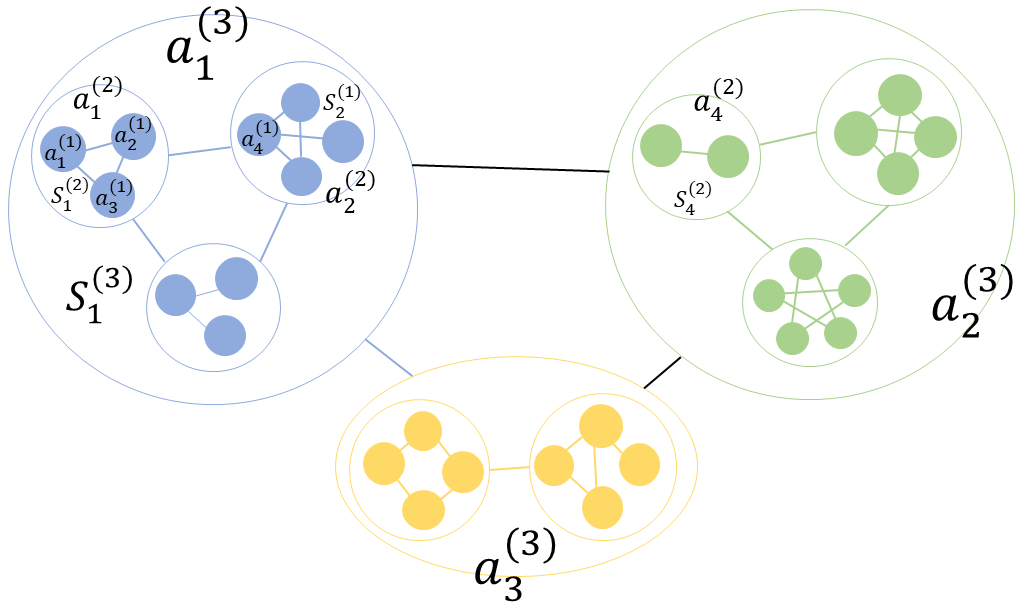}
\caption{An illustration of a multi-scale (3-level) network.} 
\label{fig:multi_scale}
\end{figure}
While network games are a powerful modeling framework, they fail to capture a common feature of human organization: groups and communities.
Indeed, investigation of communities, or close-knit groups, in social networks is a major research thread in network science.
Moreover, such groups often have a hierarchical structure~\citep{Clauset08,Girvan02}.
For example, strategic interactions among organizations in a marketplace often boil down to interactions among their constituent business units, which are, in turn, comprised of individual decision makers.
In the end, it is those lowest-level agents who ultimately accrue the consequences of these interactions (for example, corporate profits would ultimately benefit individual shareholders).
Moreover, while there are clear interdependencies among organizations, individual utilities are determined by a combination of individual actions of some agents, together with \emph{aggregate} decisions by the groups (e.g., business units, organizations).
For example, an employee's bonus is determined in part by their performance in relation to their co-workers, and in part by how well their employer (organization) performs against its competitors in the marketplace.

We propose a novel \emph{multi-scale game model} that generalizes network games to capture such hierarchical organization of individuals into groups.
Figure~\ref{fig:multi_scale} offers a stylized example in which three groups (e.g., organizations) are comprised of 2-3 subgroups each (e.g., business units), which are in turn comprised of 2-5 individual agents.
Specifically, our model includes an explicit hierarchical network structure that organizes agents into groups across a series of levels.
Further, each group is associated with an action which deterministically aggregates the decisions by its constituent agents.
The game is grounded at the lowest level, where the agents are associated with scalar actions and utility functions that have modular structure in the strategies taken at each level of the game.
For example, in Figure~\ref{fig:multi_scale}, the utility function of an individual member $a_j$ of level-3 group $a_3^{(3)}$ is a function of the strategies of (i) $a_j$'s immediate neighbors (represented by links between pairs of filled-in circles), (ii) $a_j$'s level-2 group and its network neighbor (the small hollow circles), and (iii) $a_j$'s level-3 group, $a_3^{(3)}$ (large hollow circle) and its network neighbors, $a_1^{(3)}$ and $a_2^{(3)}$.

%
Our next contribution is a series of iterative algorithms for computing pure strategy Nash equilibria that explicitly leverage the proposed multi-scale game representation.
The first of these simply takes advantage of the compact game representation in computing equilibria.
The second algorithm we propose offers a further innovation through an iterative procedure that alternates between game levels, treating groups themselves as pseudo-agents in the process.
We present sufficient conditions for the convergence of this algorithm to a pure strategy Nash equilibrium through a connection to Structured Variational Inequalities  \citep{He2000ADMAdap}, although the result is limited to games with two levels.
To address the latter limitation, we design a third iterative algorithm that now converges even in games with an arbitrary number of levels.

Our final contribution is an experimental evaluation of the proposed algorithms compared to best response dynamics.
In particular, we demonstrate orders of magnitude improvements in scalability, enabling us to solve games that cannot be solved using a conventional network game representation.



\smallskip
\noindent{\bf Related Work:} 
Network games have been an active area of research; see e.g., surveys by~\citet{jackson2015games} and \citet{GPoN2016}. We now review the most relevant papers.
%
Conditions for the existence, uniqueness and stability of Nash equilibria in network games under general best responses are studied in \citep{parise2019variational, naghizadeh2017uniqueness, scutari2014real, bramoulle2014strategic}. Variational inequalities (VI) are used in these works to analyze the fixed point and contraction properties of the best response mappings. It is identified in \cite{parise2019variational, naghizadeh2017uniqueness, scutari2014real} that when the Jacobian matrix of the best response mapping is a P-matrix or is positive definite, a feasible unique Nash equilibrium exists and can be obtained by best-response dynamics \citep{scutari2014real, parise2019variational}. 
In this paper, we extended the analysis of equilibrium and best responses for a conventional network game to that in a multi-scale network game, where the utility functions are decomposed into separable utility components to which  best responses are applied separately. This is similar to the generalization from a conventional VI problem to an SVI problem \citep{He2000ADMAdap, He2009ParaALM, He2012ADMConv, Bnouhachem2013InexADM} problem.  


Previous works on network games that involve group or community structure focus on finding such structures; e.g., community detection in networks using game theoretic methods have been studied in \citep{GTFCD2017, DetecStru2004, CDDSN2014}. 
By contrast, our work focuses on analyzing a network game with a given  group/community structure, and using the structure as an analytical tool for the analysis of equilibrium and best responses. 
\section{Preliminaries} \label{sec:preliminaries}


A general \emph{normal-form game} is defined by a set of agents (players) $I = \{1,\ldots,N\}$, with each agent $a_i$ having an action/strategy space $K_i$ and a utility function $u_i(x_i,\pmb{x}_{-i})$ that $i$ aims to maximize; $x_i \in K_i$ and $x_{-i}$ denotes the actions by all agents other than $i$.
We term the collection of strategies of all agents $\pmb{x}$ a \emph{strategy profile}.
We assume $K_i \subset \mathbb{R}$ is a compact  set. 

We focus on computing a \emph{Nash equilibrium (NE)} of a normal-form game, which is a strategy profile with each agent maximizing their utility given the strategies of others.
Formally, $\pmb{x}^*$ is a \emph{Nash equilibrium} if for each agent $i$,
\begin{equation} \label{eqn:NE}
    x_i^* \in \underset{x_i \in K_i}{\text{argmax}}~~ u_i(x_i, \pmb{x}_{-i}^*). 
\end{equation}

A \emph{network game} encodes structure in the utility functions such that they only depend on the actions by network neighbors.
Formally, a network game is defined over a weighted graph $(I,E)$, with each node an agent and $E$ is the set of edges; the agent's utility $u_i(x_i, \pmb{x}_{-i})$ reduces to $u_i(x_i,\pmb{x}_{I_i})$, where $I_i$ is the set of network neighbors of $i$, although we will frequently use the former for simplicity.

An agent's best response is its best strategy 
given the actions taken by all the other agents.  
Formally, the best response is a set defined by
\begin{equation} \label{eqn:BR}
BR_i(\pmb{x}_{-i}, u_i) = \underset{x_i}{\text{argmax}} ~u_i(x_i, \pmb{x}_{-i}).
\end{equation}
Whenever we deal with games that have a unique best response, we will use the singleton best response set to also refer to the player's best response strategy (the unique member of this set).

Clearly, a NE of a game is a fixed point of this best response correspondence.
Consequently, one way to compute a NE of a game is through \emph{best response dynamics (BRD)}, which
is a process whereby agents iteratively and asynchronously (that is, one agent at a time) take the others' actions as fixed values and play a best response to them.
%

We are going to use this BRD algorithm as a major building block below.
One important tool that is useful for analyzing BRD convergence is \emph{Variational Inequalities (VI)}.
To establish the connection between NE and VI we assume the utility functions $u_i, \forall i = 1,\dots,N$, are continuously twice differentiable. 
Let $K = \prod_{i=1}^N K_i$ and define $F: \mathbb{R}^N \rightarrow \mathbb{R}^N$ as follows:
\begin{equation} \label{eqn:F(x)}
	F(\pmb{x}) := \bigg( -\triangledown_{x_i} u_i(\pmb{x}) \bigg)_{i=1}^N ~.  
\end{equation}
Then $\pmb{x}^*$ is said to be a solution to VI$(K, F)$ if and only if
\begin{equation} \label{eqn:VI_flat}
	(\pmb{x} - \pmb{x}^*)^T F(\pmb{x}^*) \geq 0, ~  \forall \pmb{x} \in K ~.
\end{equation}
In other words, the solution set to VI$(K,F)$ is equivalent to the set of NE of the game.
Now, we can define the condition that will guarantee the convergence of BRD.
\begin{definition} \label{thm:P_upsilon}
	\textbf{The $P_{\Upsilon}$ condition}: The $\Upsilon$ matrix generated from $F: \mathbb{R}^N \rightarrow \mathbb{R}^N$ is given as follows
	\begin{equation*} 
	\Upsilon(F) = \begin{bmatrix}
	\alpha_1(F) & -\beta_{1,2}(F) & \cdots & -\beta_{1,N}(F) \\
	-\beta_{2,1}(F) & \alpha_2(F) & \cdots & -\beta_{2,N}(F) \\
	\vdots & \vdots & \ddots & \vdots \\
	-\beta_{N,1}(F) & -\beta_{N,2}(F) & \cdots & \alpha_N(F)
	\end{bmatrix},
	\end{equation*}
$	    \alpha_i(F) = \inf_{\pmb{x} \in K} ||\triangledown_i F_i||_2$, $\beta_{i,j}(F) = \sup_{\pmb{x} \in K} ||\triangledown_j F_i||_2$, $i \neq j$. 
	If $\Upsilon(F)$ is a P-matrix, that is, if all of its principal components have a positive determinant, then we say $F$ satisfies the $P_{\Upsilon}$ condition. 
\end{definition}

\begin{theorem}\citep{scutari2014real}
\label{T:BRD}
    If $F$ satisfies the $P_{\Upsilon}$ condition, then $F$ is strongly monotone on $K$, and VI$(K,F)$ has a unique solution.  Moreover, BRD converges to the unique NE from an arbitrary initial state.
\end{theorem}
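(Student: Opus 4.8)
The plan is to establish the three assertions in order, with one structural observation about $\Upsilon(F)$ doing most of the work. The key first step is to notice that $\Upsilon(F)$ is a \emph{Z-matrix}: its off-diagonal entries $-\beta_{i,j}(F)$ are nonpositive and its diagonal entries $\alpha_i(F)$ are nonnegative. A Z-matrix that is also a P-matrix is precisely a nonsingular M-matrix, so the $P_\Upsilon$ condition is equivalent to $\Upsilon(F)$ being a nonsingular M-matrix. I would then invoke the standard characterization of such matrices: there exists a positive diagonal matrix $D=\mathrm{diag}(d_1,\dots,d_N)$ with $d_i>0$ such that $D\Upsilon(F)+\Upsilon(F)^T D$ is symmetric positive definite, equivalently a positive vector $d$ with $\Upsilon(F)\,d>0$ entrywise. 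This scaling is the object that converts the entrywise magnitude bounds encoded by $\alpha_i$ and $\beta_{i,j}$ into a genuine definiteness statement.

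For strong monotonicity, I would write, for $x,y\in K$ and $z=x-y$,
\[
F(x)-F(y)=M z,\qquad M=\int_0^1 JF\big(y+t z\big)\,dt,
\]
where $JF$ is the Jacobian of $F$. The averaged Jacobian inherits the bounds $M_{ii}\ge\alpha_i(F)$ (using that $\nabla_i F_i$ enters with a consistent positive sign, i.e.\ each $u_i$ is concave in its own action, so the norm bound is a signed lower bound) and $|M_{ij}|\le\beta_{i,j}(F)$ for $i\ne j$. Forming the $D$-weighted inner product and setting $v_i=|z_i|$, a termwise estimate gives
\[
(F(x)-F(y))^T D\,(x-y)=z^T D M z\ \ge\ v^T D\,\Upsilon(F)\,v\ \ge\ \tfrac12\lambda_{\min}\!\big(D\Upsilon(F)+\Upsilon(F)^T D\big)\,\|z\|_2^2 .
\]
Since the right-hand constant is positive and the weights $d_i$ are bounded away from $0$ and $\infty$, $F$ is strongly monotone on $K$ in the ordinary Euclidean sense. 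Uniqueness of the VI solution is then immediate: if $x_1^*$ and $x_2^*$ both solve VI$(K,F)$, adding the two defining inequalities and applying strong monotonicity forces $x_1^*=x_2^*$. Existence follows because $K$ is compact and convex and $F$ is continuous, so VI$(K,F)$ admits a solution by the standard existence theorem; combined with the NE--VI equivalence already noted, the game has a unique NE.

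For the final claim I would recast BRD as iteration of the best-response map and show it is a contraction in the $d$-weighted block-maximum norm $\|\cdot\|_{\infty,d}$. The same M-matrix scaling controls the sensitivity of each agent's best response to the others' actions: the reweighted ratios $\beta_{i,j}/\alpha_i$ yield a per-agent Lipschitz constant strictly below $1$, which is exactly the contraction modulus of the best-response map in $\|\cdot\|_{\infty,d}$. A weighted-maximum-norm contraction converges to its unique fixed point not only under synchronous iteration but also under the asynchronous, one-agent-at-a-time updates of BRD, by the classical asynchronous-convergence theorem for maximum-norm contractions; that fixed point is the unique NE.

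I expect the main obstacle to be the middle link rather than the bookkeeping at either end: turning the \emph{magnitude} bounds $\alpha_i,\beta_{i,j}$ and the abstract P-matrix hypothesis into an explicit definiteness and contraction constant. Concretely, the delicate points are (i) justifying that $\nabla_i F_i$ contributes with the correct sign so that $\alpha_i$ is a true lower bound, which is where own-action concavity enters, and (ii) producing the positive scaling $D$ and verifying the weighted-norm contraction modulus is below $1$ uniformly over $K$. Establishing asynchronous (as opposed to merely synchronous) convergence is the remaining subtlety, handled by the maximum-norm contraction machinery rather than a bare Banach fixed-point argument.
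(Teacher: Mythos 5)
You should note at the outset that the paper never proves this theorem: it is imported verbatim from \citet{scutari2014real}, and the paper's appendix only records auxiliary facts surrounding it (the equivalence of the $P_\Gamma$ and $P_\Upsilon$ conditions, and the diagonal-dominance/M-matrix implications). So your proposal can only be measured against the argument in that source. Your skeleton for the uniqueness and convergence claims is essentially that argument: $\Upsilon(F)$ is a Z-matrix, so the $P_\Upsilon$ condition makes it a nonsingular M-matrix; this yields positive weights, per-agent best-response Lipschitz constants $\beta_{i,j}/\alpha_i$ (own-action strong concavity, $\alpha_i>0$, is indeed needed here and follows from positivity of the diagonal entries of a P-matrix), hence a contraction of the best-response map in a weighted block-maximum norm, and convergence of the asynchronous, one-agent-at-a-time BRD by the classical asynchronous convergence theorem. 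That part is sound.

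The genuine gap is in the strong-monotonicity step. You correctly derive the $D$-weighted inequality $(F(x)-F(y))^T D (x-y) \ge \tfrac12 \lambda_{\min}(D\Upsilon + \Upsilon^T D)\,\|x-y\|_2^2$, but the concluding sentence --- that boundedness of the weights $d_i$ then gives strong monotonicity ``in the ordinary Euclidean sense'' --- is false: weighted and unweighted monotonicity do not transfer to one another this way. Concretely, take $N=2$ and $F(\pmb{x}) = A\pmb{x}$ with $A = \begin{pmatrix} 1 & -a \\ 0 & 1 \end{pmatrix}$ and $a > 2$ (a legitimate instance: $u_1 = -x_1^2/2 + a x_1 x_2$, $u_2 = -x_2^2/2$). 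Then $\Upsilon(F) = A$ is a P-matrix, and $DA + A^T D \succ 0$ for $D = \mathrm{diag}(1, d_2)$ with $d_2 > a^2/4$, yet $z^T A z = 2 - a < 0$ at $z = (1,1)$, so $F$ is not even monotone in the Euclidean sense; no argument can close this step. What the P-matrix hypothesis actually delivers --- and what \citet{scutari2014real} prove --- is the \emph{uniformly-P} property $\max_i \,(F_i(x)-F_i(y))(x_i-y_i) \ge c\,\|x-y\|_2^2$; Euclidean strong monotonicity requires the strictly stronger hypothesis that the symmetrized $\Upsilon$ be positive definite. This also invalidates your uniqueness argument as written: adding the two VI inequalities yields $(F(x_1^*)-F(x_2^*))^T(x_1^*-x_2^*) \le 0$ with no weights, which your weighted inequality cannot contradict. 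The repair is standard and cheap: because $K = \prod_i K_i$ is a Cartesian product, the VI splits coordinatewise, giving $(F_i(x_1^*)-F_i(x_2^*))\big((x_1^*)_i-(x_2^*)_i\big) \le 0$ for every $i$, which contradicts the uniformly-P property; alternatively, uniqueness follows for free from your own contraction argument. With ``strongly monotone'' replaced by ``uniformly P,'' the remainder of your proof does carry the theorem's operative content (uniqueness of the NE and convergence of BRD).
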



\section{A Multi-Scale Game Model} \label{sec:model}

Consider a conventional network (graphical) game with the set $I$ of $N$ agents situated on a network $G=(I,E)$, each with a utility function $u_i(x_i,\pmb{x}_{I_i})$, with $I_i$ the set of $i$'s neighbors, $I$ the full set of agents/nodes and $E$ the set of edges connecting them.\footnote{The edges are generally weighted, resulting in a weighted adjacency matrix on which the utility depends.} 
Suppose that this network $G$ exhibits the following structure and feature of the strategic dependence among agents: agents can be partitioned into a collection of groups $\{S_k\}$, where $k$ is a group index, and an agent $a_i$ in the $k$th group (i.e., $a_i \in S_k$) has a utility function that depends (i) on the strategies of its network neighbors in $S_k$, and (ii) \emph{only on the aggregate strategies} of groups other than $k$ (see, e.g., Fig.~\ref{fig:multi_scale}). 
Further, these groups may go on to form larger groups, whose aggregate strategies impact each other's agents, giving rise to a {\em multi-scale} structure of the network. 
This kind of structure is very natural in a myriad of situations.
For example, members of criminal organizations take stock of individual behavior by members of their own organization, but their interactions with other organizations (criminal or otherwise) are perceived in group terms (e.g., how much another group has harmed theirs).
A similar multi-level interaction structure exists in national or ethnic conflicts, organizational competition in a market place, and politics. 
Indeed, a persistent finding in network science is that networks exhibit a multi-scale interaction structure (i.e., communities, and hierarchies of communities)~\citep{Girvan02,Clauset08}.

We present a general model to capture such multi-scale structure.  
Formally, an $L$-level structure is given by a hierarchical graph structure $\{G^{(l)}\}$ for each level $l$, $1\leq l \leq L$, where $G^{(l)}= (\{S_k^{(l)}\}_{k}, E^{(l)})$ represents the level-$l$ structure. The first component,   $\{S_k^{(l)}\}_k$ prescribes a partition, where agents in level $l-1$ form disjoint groups given by this partition; each group is viewed as an agent in level $l$, denoted as $a^{(l)}_k$.  Notationally, while both $a^{(l)}_k$ and $S_k^{(l)}$ bear the superscript $(l)$, the former refers to a level-$l$ agent, while the latter is the group (of level-$(l-1)$ agents) that the former represents. The set of level-$l$ agents is denoted by $I^{(l)}$ and their total number $N^{(l)}$. 
The second component, $E^{(l)}$, is a set of edges that connect level-$l$ agents, encoding the dependence relationship among the groups they represent. 
This structure is anchored in level 1 (the lowest level), where sets $S_k^{(1)}$ are singletons, corresponding to agents $a_k$ in the game, who constitute the set $I$.

To illustrate, the multi-scale structure shown in Fig.~\ref{fig:multi_scale} is given by $G^{(1)} = G =(\{S_k^{(1)}\}_k=I, E^{(1)}=E)$, as well as how level-1 agents are grouped into level-2 agents, how level-2 agents are further grouped into level-3 agents, and the edges connecting these groups at each level. 

It should be obvious that the above multi-scale representation of a graphical game is a generalization of a conventional graphical game, as any such game essentially corresponds to a $L=1$ multi-scale representation.  On the other hand, not all conventional graphical games have a meaningful $L>1$ multi-scale representation (with non-singleton groups of level-1 agents); this is because our assumption that an agent's utility only depends on the \emph{aggregate} decisions by groups other than the one they belong to implies certain properties of the dependence structure.  For the remainder of this paper we will proceed with a given multi-scale structure defined above, while in Appendix \ref{appendix:reverse_engineer} we outline a set of conditions on a graphical game $G$ that allows us to represent it in a (non-trivial) multi-scale fashion.

Since the resulting multi-scale network is strictly hierarchical, we can define a \emph{direct supervisor} of agent $a_i^{(l)}$ in level-$l$ to be the agent $a_k^{(l+1)}$ corresponding to the level-($l+1$) group $k$ that the former belongs to.
Similarly, two agents who belong in the same level-$l$ group $k$ are (level-$l$) \emph{group mates}.
Finally, note that any level-1 agent $a_i$ belongs to exactly one group in each level $l$.
We index a level-$l$ group to which $a_i$ belongs by $k_{il}$.

In order to capture the agent dependence on aggregate actions, we define an \emph{aggregation function} $\sigma_k^{(l)}$ for each level-$l$ group $k$ that maps individual actions of group members to $\mathbb{R}$ (a \emph{group strategy}).
Specifically, consider a level-$l$ group $S_k^{(l)}$ with level-($l-1$) agents in this group playing a strategy profile $\pmb{x}_{S_k^{(l)}}$.
The (scalar) group strategy, which is also the strategy for the corresponding level-($l+1$) agent, is determined by the aggregation function, 
\begin{align}
    \label{eqn:agg_action}
    x_k^{(l)} = \sigma_k^{(l)}(\pmb{x}_{S_k^{(l)}}).
\end{align}
A natural example of this is linear (e.g., agents respond to total levels of violence by other criminal organizations): $\sigma_k^{(l)}(\pmb{x}_{S_k^{(l)}}) = \sum_{i \in S_k^{(l)}} x_i^{(l)}$.

The $L$-level structure above is captured strategically by introducing structure into the utility functions of agents.
Let $I_{k_{il}}$ denote the set of neighbors of level-$l$ group $k$ to which level-1 agent $a_i$ belongs; i.e., this is the set of level-$l$ groups that interact with agent $a_i$'s group.
This level-1 agent's utility function can be decomposed as follows:
\begin{equation} \label{eqn:utility}
    u_i(x_i, \pmb{x}_{-i}) = \sum_{l=1}^{L} u^{(l)}_{k_{il}} \bigg( x^{(l)}_{k_{il}}, \pmb{x}^{(l)}_{I_{k_{il}}} \bigg).
\end{equation}
In this definition, the level-$l$ strategies $x^{(l)}_{k}$ are implicitly functions of the level-1 strategies of agents that comprise the group, per a recursive application of Eqn.~\eqref{eqn:agg_action}.
Consequently, the utility is an additive function of the hierarchy of group-level components for increasingly (with $l$) abstract group of agents.
Note that conventional network games are a special case with only a single level ($L=1$).

To illustrate, if we consider just two levels (a collection of individuals and groups to which they directly belong), the utility function of each agent $a_i$ is a sum of two components:
\[
    u_i(x_i, \pmb{x}_{-i}) =  u^{(1)}_{k_{i1}} \bigg( x^{(1)}_{k_{i1}}, \pmb{x}^{(1)}_{I_{k_{i1}}} \bigg) + u^{(2)}_{k_{i2}} \bigg( x^{(2)}_{k_{i2}}, \pmb{x}^{(2)}_{I_{k_{i2}}} \bigg).
\]
In the first component, $x^{(1)}_{k_{i1}} = x_i$, since level-1 groups correspond to individual agents, whereas $\pmb{x}^{(1)}_{I_{k_{i1}}}$ is the strategy profile of $i$'s  neighbors \emph{belonging to the same group as $i$}, given by  $E^{(1)}$.
The second utility component now depends only on the aggregate strategy $x^{(2)}_{k_{i2}}$ of the group to which $i$ belongs, as well as the aggregate strategies of the groups with which $i$'s group interacts, given by $E^{(2)}$.

\section{Algorithms and Analysis}\label{sec:algorithms} 

Consider the BRD algorithm (formalized in Algorithm~\ref{alg:flat_BRD}) in which we iteratively select an agent who plays a best response to the strategy of the rest from the previous iteration.
\begin{algorithm}[htbp]\label{alg:flat_BRD}
	Initialize the game, $t=0, x_i(0) = (\pmb{x}_0)_i, i=1,\cdots,N$\;
	\While{not converged}{
		\For{i = 1:$N$} 
		{
			$x_i(t+1) = BR_i (\pmb{x}_{-i}(t), u_i)$\\
		}
		$t \leftarrow t+1$
	}
	\caption{BRD Algorithm}
\end{algorithm}

The conventional BRD algorithm operates on the ``flattened'' utility function which evaluates utilities explicitly as functions of the strategies played by all agents $a_i \in I$.
Our goal henceforth is to develop algorithms that take advantage of the special multi-scale structure and enable significantly better scalability than standard BRD, while preserving the convergence properties of BRD.





\subsection{Taking Advantage of Multi-Scale Utility Representation}

The simplest way to take advantage of the multi-scale representation is to directly leverage the structure of the utility function in computing best responses.
Specifically, the multi-scale utility function is more compact than one that explicitly accounts for the strategies of all  neighbors of $i$ (which includes \emph{all} of the players in groups other than the one $i$ belongs to).
This typically results in a direct computational benefit to computing a best response.
For example, in a game with a linear best response, this can result in an exponential reduction in the number of linear operations.

\begin{algorithm}[htbp]\label{alg:hierarchical}
	\SetAlgoLined
	
	Initialize the game, $t=0, x^{(1)}_i(0)=(\pmb{x}_0)_i, i = 1,\dots,N$
	
	\For{l = 2:L}{
	    \For{k = 1:$N^{(l)}$}{
		    $\pmb{x}^{(l)}_k(0) = \sigma^{(l)}_k ( \pmb{x}_{S^{(l)}_k}(0) )$;\\
		}
	}
	\While{not converged}{
	\For{i = 1:$N$ (Level-1)}{
		$x^{(1)}_i(t + 1) = BR_i (\pmb{x}^{(1)}_{-i}(t), u_i)$\\
	}
	\For{l = 2:L}{	
		\For{k = 1:$N^{(l)}$}{
		    $\pmb{x}^{(l)}_k(t+1) = \sigma^{(l)}_k ( \pmb{x}_{S^{(l)}_k}(t+1) )$;\\
		}
	}
		$t \leftarrow t+1$;
	}
	\caption{Multi-Scale BRD (MS-BRD)} 
\end{algorithm}
The resulting algorithm, \emph{Multi-Scale Best-Response Dynamics (MS-BRD)}, which takes advantage of our utility representation is formalized as Algorithm~\ref{alg:hierarchical}.
The main difference from BRD is that it explicitly uses the multi-scale utility representation:  in each iteration, it updates the aggregated strategies at all levels for the groups to which the most recent best-responding agent belongs.
Since \textsc{MS-BRD} simply performs operations identical to BRD but efficiently, its convergence is guaranteed under the same conditions (see Theorem~\ref{T:BRD}).
Next, we present iterative algorithms for computing NE that take further advantage of the multi-scale structure, and study their convergence.

\subsection{Taking Advantage of Multi-Scale Strategic Dependence Structure}

In order to take full advantage of the multi-scale game structure, we now aim to develop algorithms that treat groups explicitly as agents, with the idea that iterative interactions among these can significantly speed up convergence.
Of course, in our model groups are not actual agents in the game: utility functions are only defined for agents in level 1.
However, note that we already have well-defined group strategies -- these are just the aggregations of agent strategies at the level immediately below, per the aggregation function~\eqref{eqn:agg_action}.
Moreover, we have natural utilities for groups as well: we can use the corresponding group-level component of the utility of any agent in the group (note that these are identical for all group members in Eqn.~\eqref{eqn:utility}).
However, using these as group utilities will in fact not work: since ultimately the game is only among the agents in level 1, equilibria of all of the games at more abstract levels \emph{must be consistent with equilibrium strategies in level 1}.
On the other hand, we need to enforce consistency only between neighboring levels, since that fully captures the across-level interdependence induced by the aggregation function.
Therefore, we define the following \emph{pseudo-utility functions} for agents at levels other than 1, with agent $k$ in level $l$ corresponding to a subset of agents from level $l-1$:
\begin{align} \label{eqn:level_l}
    \hat{u}^{(l)}_k = &~ u^{(l)}_k \bigg( x^{(l)}_k, \pmb{x}^{(l)}_{I_k} \bigg) \nonumber 
    - L_k^{(l,l-1)} \bigg( x^{(l)}_k, \sigma^{(l)}_k ( \pmb{x}_{S^{(l)}_k} ) \bigg) \nonumber \\
    &~ - L_k^{(l,l+1)} \bigg( \sigma^{(l+1)}_k ( \pmb{x}_{S^{(l+1)}_k} ), x^{(l+1)}_k \bigg).
\end{align}
The first term is the level-$l$ component of the utility of any level-1 agent in group $k$.
The second and third terms model the inter-level inconsistency loss that penalizes a level-$l$ agent $a^{(l)}_k$, where $L_k^{(l,l+1)}$ and $L_i^{(l,l-1)}$ penalize its inconsistency with the level-$(l+1)$ and level-$(l-1)$ entities respectively. In general, $L_k^{(l,l+1)}$ is a different function from $L_k^{(l+1,l)}$; we elaborate on this further below.

The central idea behind the second algorithm we propose is simple: in addition to iterating best response steps at level 1, we now interleave them with best response steps taken by agents at higher levels, which we can since strategies and utilities of these pseudo-agents are well defined.
This algorithm is similar to the augmented Lagrangian method in optimization theory, where penalty terms are added to relax an equality constraint and turn the problem into one with separable operators. We can decompose this type of problem into smaller subproblems and solve the subproblems sequentially using the alternating direction method (ADM) \citep{Yuan2011LQPSVI,Bnouhachem2013InexADM}. 
The games at adjacent levels are coupled through the equality constraints on their action profiles given by Eqn (\ref{eqn:agg_action}), and the penalty functions are updated before starting a new iteration. 
The full algorithm, which we call \emph{Separated Hierarchical BRD (SH-BRD)}, is provided in Algorithm~\eqref{alg:separated}.





The penalty updating rule in iteration $t$ of Algorithm~\eqref{alg:separated} is:
\begin{enumerate}
    \item For $l = 2,\dots, L, i=1,\dots,N^{(l)}$
	\begin{align} \label{eqn:loss2_update}
	    &~ L_i^{(l,l-1)} \bigg( x^{(l)}_i, \sigma^{(l)}_i ( \pmb{x}_{S^{(l)}_i}(t+1) ) \bigg) \nonumber \\
	    = &~ h^{(l)}_i \bigg[ x^{(l)}_i - \sigma^{(l)}_i ( \pmb{x}_{S^{(l)}_i}(t+1) ) + \lambda^{(l)}_i(t) \bigg]^2.
	\end{align}

	\item For $l = 1,\dots, L-1; i=1,\dots,N^{(l)}$, where $a_i^{(l)} \in S_k^{(l+1)}$ 
	\begin{align} \label{eqn:loss1_update}
	    &~ L_k^{(l,l+1)} \bigg( \sigma^{(l+1)}_k ( \pmb{x}_{S^{(l+1)}_k}) , x^{(l+1)}_k(t) \bigg) \nonumber\\
	    = &~ h^{(l+1)}_k \bigg[ \sigma^{(l+1)}_k ( \pmb{x}_{S^{(l+1)}_k} ) - x^{(l+1)}_k(t) - \lambda^{(l+1)}_k(t) \bigg]^2.
	\end{align}
	
	\item For $l = 2,\dots, L, i=1,\dots,N^{(l)}$ 
	\begin{align} \label{eqn:lambda_update}
	    &~ \lambda^{(l)}_i(t+1) \nonumber \\
	    = &~ \lambda^{(l)}_i(t) - h^{(l)}_i \bigg[ \sigma^{(l)}_i ( \pmb{x}_{S^{(l)}_i}(t+1) ) - x^{(l)}_i(t+1) \bigg].
	\end{align}
\end{enumerate}
When updating, all other variables are treated as fixed, and $\pmb{\lambda}^{(l)}(0)$, $h^{(l)}_i > 0$ are chosen arbitrarily.

\begin{algorithm}[htbp]\label{alg:separated}
	\SetAlgoLined
	
	Initialize the game, $t=0, x^{(1)}_i(0)=(\pmb{x}_0)_i, i = 1,\dots,N^{(0)}$
	
	\For{l = 2:L}{
	    \For{k = 1:$N^{(l)}$}{
		    $\pmb{x}^{(l)}_k(0) = \sigma^{(l)}_k ( \pmb{x}_{S^{(l)}_k}(0) )$;\\
		}
	}
	\While{not converged}{
	    \For{l = 1:L}{
    		\For{i = 1:$N^{(l)}$ ($l$ to $l-1$ Penalty Update, if $l > 1$)}{
    			Update $L_i^{(l, l-1)}$
    		}
    		\For{i = 1:$N^{(l)}$ ($l$ to $l+1$ Penalty Update, if $l < L$)}{
    			Update $L_k^{(l, l+1)}$, where $a_i^{(l)} \in S^{(l+1)}_k$
    		}
    		\For{i = 1:$N^{(l)}$ (Best Response)}{
    			\begin{align*}
    			    x^{(l)}_i(t + 1) = BR_i \bigg(& \sigma^{(l)}_i ( \pmb{x}_{S^{(l)}_i}(t+1) ), \\
    			    &\pmb{x}^{(l)}_{I_i}(t), x^{(l+1)}_k(t), \hat{u}^{(l)}_i \bigg) 
    			\end{align*}\\
    		}
		}
		$t \leftarrow t+1$;
	}
	\caption{Separated Hierarchical BRD (SH-BRD)} 
\end{algorithm}

Unlike \textsc{MS-BRD}, the convergence of the \textsc{SH-BRD} algorithm is non-trivial.
To prove it, we exploit a connection between this algorithm and Structured Variational Inequalities (SVI) with separable operators~\citep{He2009ParaALM, He2012ADMConv, Bnouhachem2013InexADM}.
To formally state the convergence result, we need to make several explicit assumptions.
\begin{assumption} \label{assumption:cont_diff}
	The functions $u^{(l)}_i, \forall l = 1,\dots,L, \forall i = 1,\dots,N^{(l-1)}$ are twice continuously differentiable.
\end{assumption}

\begin{assumption}\label{assumption:monotone}
	$-\triangledown_{x^{(l)}_i} u^{(l)}_i$ are monotone $\forall l = 1,\dots,L, \forall i = 1,\dots,N^{(l-1)}$. The solution set of $\triangledown_{x^{(l)}_i} u^{(l)}_i = 0, \forall l = 1,\dots,L, \forall i = 1,\dots,N^{(l-1)}$ is nonempty, 
	with solutions in the interior of the action spaces.
\end{assumption}



Let $F^{(l)}$ be defined as in Equation~\eqref{eqn:F(x)} for each level-$l$ pseudo-utility.
\begin{assumption}\label{assumption:unique}
	$F^{(l)}$ satisfy the  $P_{\Upsilon}$ condition.
\end{assumption}
Note that these assumptions directly generalize the conditions required for the convergence of BRD to our multi-scale pseudo-utilities.
The following theorem formally states that \textsc{SH-BRD} converges to a NE for \emph{ 2-level games}.
\begin{theorem} \label{thm:sep_alg_conv}
Suppose $L=2$.
If Assumptions~\ref{assumption:cont_diff} and~\ref{assumption:unique} hold, \textsc{SH-BRD} converges to a NE, which is unique.
\end{theorem}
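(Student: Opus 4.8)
The plan is to recast the fixed points of \textsc{SH-BRD} as solutions of a structured variational inequality (SVI) with two separable blocks --- the level-1 strategies $\pmb{x}^{(1)}$ and the level-2 group strategies $\pmb{x}^{(2)}$ --- coupled by the aggregation constraint, and then to recognize \textsc{SH-BRD} as the alternating direction method (ADM) applied to this SVI, whose convergence follows from the monotone-SVI theory of He et al. First I would set up the reformulation: promote the group strategies $\pmb{x}^{(2)}$ to independent decision variables and impose the coupling $\pmb{x}^{(2)} = \sigma(\pmb{x}^{(1)})$ componentwise (for the linear aggregation this is $\pmb{x}^{(2)} = A\pmb{x}^{(1)}$ with $A$ a fixed $0/1$ group-membership matrix). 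Defining the block operators $F^{(1)}$ and $F^{(2)}$ from the level-1 and level-2 pseudo-utility gradients as in Eqn.~\eqref{eqn:F(x)}, a constrained pair solving each block's VI is exactly a profile in which every level-1 agent best-responds to the true level-2 aggregates and vice versa --- that is, a NE of the two-level game.

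Second, I would establish well-posedness. By Assumption~\ref{assumption:unique}, each $F^{(l)}$ satisfies the $P_{\Upsilon}$ condition, so Theorem~\ref{T:BRD} yields that each $F^{(l)}$ is strongly monotone on the compact action set. Strong monotonicity of the separable block operators together with compactness of $K$ gives existence and, crucially, \emph{uniqueness} of the SVI solution, hence of the NE; this discharges the uniqueness half of the claim and, incidentally, shows why only Assumptions~\ref{assumption:cont_diff} and~\ref{assumption:unique} are needed (strong monotonicity subsumes the monotonicity hypothesis of Assumption~\ref{assumption:monotone}).

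Third --- and this is the crux --- I would verify that \textsc{SH-BRD} is literally ADM on this SVI. The quadratic penalties $L^{(l,l-1)}$ and $L^{(l,l+1)}$ in Eqns.~\eqref{eqn:loss2_update}--\eqref{eqn:loss1_update} are the augmented-Lagrangian terms for the coupling residual $x^{(l)}_k - \sigma^{(l)}_k(\pmb{x}_{S^{(l)}_k}) = 0$; the $\lambda$ variables are the associated multipliers, and their update~\eqref{eqn:lambda_update} is precisely the dual-ascent step. The interleaved best-response steps over levels $1$ and $2$ (all other variables held fixed) are exactly the alternating primal minimizations of the augmented Lagrangian: because the penalties are quadratic and the operators are the utility gradients, the best response to each penalized pseudo-utility $\hat{u}^{(l)}$ coincides with the solution of that block's VI subproblem in ADM. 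I would make this identification precise term by term and check that any fixed point of the iteration forces the residual to zero and $\lambda$ stationary, so that a fixed point returns a genuine NE.

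Finally, with the problem written as a monotone SVI with affine coupling and separable strongly monotone operators, I would invoke the ADM convergence theorem of He et al. to conclude that the \textsc{SH-BRD} iterates converge to the unique SVI solution, i.e.\ to the unique NE. The main obstacle I anticipate is the third step: establishing the exact correspondence between the game-theoretic penalty/best-response scheme and the analytic ADM iteration, in particular ensuring the best responses equal the subproblem solutions --- this is where interior optimality (so that a best response is characterized by $\nabla = 0$) and the monotonicity from the $P_{\Upsilon}$ condition are both used --- and confirming that the affine aggregation keeps the coupled problem inside the He et al.\ framework. The restriction to $L=2$ is exactly what supplies a two-block splitting, for which ADM convergence is classical; more levels would require multi-block ADM, which can fail without further conditions, explaining the limitation flagged in the statement.
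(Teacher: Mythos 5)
Your proposal follows essentially the same route as the paper's own proof: both promote the level-2 aggregates to independent variables coupled by the affine constraint $A\pmb{x}+\pmb{y}=\pmb{0}$, recognize the penalty/multiplier updates of \textsc{SH-BRD} as the augmented-Lagrangian (ADM) iteration for the resulting two-block SVI, and conclude convergence to the unique NE from the contraction property of monotone-SVI/ADM theory in He et al. The paper's appendix carries this out by rewriting the algorithm's steps as the penalized VI subproblems plus dual ascent and citing the contraction inequality, exactly as your third and fourth steps envision, so the proposal is correct and matches the paper's argument.
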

The full proof of this theorem, which makes use of the connection between \textsc{SH-BRD} and SVI, is provided in the Supplement due to space constraint.
The central issue, however, is that there are no established convergence guarantees for ADM-based algorithms for SVI with 3 or more separable operators.
Alternative algorithms for SVI can extend to the case of 3 operators using parallel operator updates with regularization terms, but no approaches exist that can handle more than 3 operators~\citep{He2009ParaALM}.
We thus propose an algorithm for iteratively solving multi-scale games that uses the general idea from \textsc{SH-BRD}, but packs all levels into two \emph{meta-levels}.
The two meta-levels each has to be comprised of consecutive levels.
For example, if we have 5 levels, we can have $\{1,2,3\}$ and $\{4,5\}$ combinations, but not $\{1,2,4\}$ and $\{3,5\}$.
Upon grouping levels together to obtain a meta-game with only two meta-levels, we can apply what amounts to a 2-level version of the  \textsc{SH-BRD}.
This yields an algorithm, which we call \emph{Hybrid Hierarchical BRD (HH-BRD)}, that now provably converges to a NE for an arbitrary number of levels $L$ given assumptions 1-3.


As presenting the general version of \textsc{HH-BRD} involves cumbersome notation, we illustrate the idea by presenting it for a 4-level game (Algorithm \ref{alg:hybrid}).
The fully general version is deferred to the Supplement.
In this example, 
the objectives of the meta-levels are defined as
\begin{eqnarray*}
	\hat{u}^{(sl_1)}_{i} &=& u^{(1)}_{i} + u^{(2)}_{k_{i2}} - L_{k_{i3}}^{(sl_1,sl_2)}\bigg( \sigma^{(3)}_{k_{i3}} ( \pmb{x}_{S^{(3)}_{k_{i3}}} ) , x^{(3)}_{k_{i3}} \bigg), \\
%
    \hat{u}^{(sl_2)}_{k_{i3}} &=& u^{(3)}_{k_{i3}} + u^{(4)}_{k_{i4}} - L_{k_{i3}}^{(sl_2,sl_1)}\bigg( x^{(3)}_{k_{i3}}, \sigma^{(3)}_{k_{i3}} ( \pmb{x}_{S^{(3)}_{k_{i3}}} )  \bigg)~. 
\end{eqnarray*}

\begin{algorithm}[ht]\label{alg:hybrid}
	\SetAlgoLined
	
	Initialize the game, $t=0, x^{(1)}_i(0)=(\pmb{x}_0)_i, i = 1,\dots,N^{(0)}$
	
	\For{l = 2:4}{
	    \For{k = 1:$N^{(l)}$}{
		    $\pmb{x}^{(l)}_k(0) = \sigma^{(l)}_k ( \pmb{x}_{S^{(l)}_k}(0) )$;\\
		}
	}
	\While{not converged}{
		\For{k = 1:$N^{(3)}$ (Meta-Level-1 Penalty Update)}{
			Update $L_k^{(sl_1, sl_2)}$
		}
		\For{$i = 1:N^{(1)}$ (Level-1)}{
			$x^{(1)}_i(t + 1) = BR_i \bigg(\pmb{x}^{(1)}_{I_i}(t), \pmb{x}^{(2)}_{I_{k_{i2}}}(t), x^{(3)}_{k_{i3}}(t), \hat{u}^{(sl_1)}_i \bigg)$\\
		}
		
		\For{j = 1:$N^{(2)}$ (Level-2)}{
			$\pmb{x}^{(2)}_j(t+1) = \sigma^{(2)}_j ( \pmb{x}_{S^{(2)}_j}(t+1) )$\\
		}
		
		\For{k = 1:$N^{(3)}$ (Meta-Level-2 Penalty Update)}{
			Update $L_k^{(sl_2, sl_1)}$
		}
		\For{$k = 1:N^{(3)}$ (Level-3)}{
		    \vspace{-0.1in}
			\begin{align*}
			    x^{(3)}_k(t + 1) = BR_i &\bigg(  \sigma^{(3)}_k( \pmb{x}_{S_k^{(3)}}(t+1) ), \pmb{x}^{(3)}_{I_k}(t), \\& \pmb{x}^{(4)}_{-p}(t),
			    \hat{u}^{(sl_2)}_k \bigg),  (a^{(3)}_k \in S_p^{(4)})
			\end{align*}
		}
		
		\For{p = 1:$N^{(4)}$ (Level-4)}{
			$\pmb{x}^{(4)}_p(t+1) = \sigma^{(4)}_p ( \pmb{x}_{S^{(4)}_p}(t+1) )$\\
		}
		
		$t \leftarrow t+1$;
	}
	\caption{Hybrid Hierarchical BRD} 
\end{algorithm}

\begin{theorem} \label{thm:hybrid_alg_conv}
	Suppose Assumptions~\ref{assumption:cont_diff}-\ref{assumption:unique} hold
	Then \textsc{HH-BRD} finds the unique NE.
\end{theorem}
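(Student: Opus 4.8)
The plan is to reduce the $L$-level game to the two-level case already settled by Theorem~\ref{thm:sep_alg_conv}. First I would make precise the meta-level construction preceding Algorithm~\ref{alg:hybrid}: fix a split of the levels into two consecutive blocks $\{1,\dots,m\}$ and $\{m+1,\dots,L\}$ and build a two-level \emph{meta-game} whose meta-level-1 agents are the original level-1 agents and whose meta-level-2 agents are the original level-$(m+1)$ pseudo-agents. Their utilities are the within-block partial sums of the decomposition~\eqref{eqn:utility}, namely $U^{(1)}_i=\sum_{l=1}^{m}u^{(l)}_{k_{il}}$ and $U^{(2)}_{k}=\sum_{l=m+1}^{L}u^{(l)}_{k_{il}}$, each regarded as a function of the anchor-level strategies of its block through repeated application of the aggregation maps~\eqref{eqn:agg_action}; the two blocks are coupled only by the single aggregation constraint linking level $m$ to level $m+1$. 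The first step is then to check that \textsc{HH-BRD} is \emph{exactly} \textsc{SH-BRD} executed on this meta-game: the inner aggregation loops of each block realize the deterministic aggregation that turns $U^{(1)},U^{(2)}$ into functions of their anchor strategies, while the two penalty-update/best-response passes are precisely the two-operator ADM steps analyzed in Theorem~\ref{thm:sep_alg_conv}. This identification is bookkeeping and should be routine.

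The substantive step is to verify that the meta-game satisfies the hypotheses of Theorem~\ref{thm:sep_alg_conv}, i.e.\ Assumptions~\ref{assumption:cont_diff} and~\ref{assumption:unique} transported to the two meta-operators $F^{(sl_1)},F^{(sl_2)}$ of~\eqref{eqn:F(x)}. Twice continuous differentiability (Assumption~\ref{assumption:cont_diff}) is immediate, since $U^{(1)},U^{(2)}$ are finite sums of twice-differentiable utility components composed with the smooth (in the natural linear case, affine) aggregation maps. The delicate point is the $P_{\Upsilon}$ property (Assumption~\ref{assumption:unique}) for $F^{(sl_1)},F^{(sl_2)}$, because each is now assembled from a \emph{sum} of per-level components and $P_{\Upsilon}$ is not visibly closed under addition. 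This is where Assumption~\ref{assumption:monotone}, unused in the two-level Theorem~\ref{thm:sep_alg_conv}, does the work. Within a block the anchor level enters with the identity aggregation, so its component ($u^{(1)}$ for block~1, $u^{(m+1)}$ for block~2) already satisfies $P_{\Upsilon}$ and hence, by Theorem~\ref{T:BRD}, is strongly monotone in \emph{all} anchor directions; every higher sub-level component is only monotone (Assumption~\ref{assumption:monotone}), and pushing it down to the anchor level through a linear aggregation $A$ turns its Jacobian into the congruence $A^{\top}(\cdot)A$, which preserves monotonicity. Since a strongly monotone operator plus a monotone operator is again strongly monotone, each meta-operator $F^{(sl_j)}$ is strongly monotone on the compact meta action space, which is exactly the property the SVI/ADM argument behind Theorem~\ref{thm:sep_alg_conv} relies on.

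Granting the meta-game hypotheses, I would invoke Theorem~\ref{thm:sep_alg_conv} to conclude that \textsc{HH-BRD} converges to the unique NE of the meta-game, and the last step is to translate this back to the original game. Convergence of the multiplier recursion~\eqref{eqn:lambda_update} forces the cross-block constraint $x^{(m+1)}_k=\sigma^{(m+1)}_k(\pmb{x}_{S^{(m+1)}_k})$ to hold exactly at the limit, so the two blocks are mutually consistent and all higher-level strategies equal the aggregations of the lower ones across the split. Moreover, once consistency is restored, anchor-level stationarity of the summed utility $U^{(j)}$ coincides with the level-by-level optimality conditions~\eqref{eqn:NE} of the original game, because an anchor agent's marginal effect on every within-block level is captured exactly by differentiating the composed aggregation, as in the correctness of \textsc{MS-BRD}. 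Combining across-split consistency with the recovered within-block best responses yields a profile satisfying~\eqref{eqn:NE} at every level, i.e.\ a NE of the original $L$-level game, with uniqueness inherited from that of the meta-game NE. The main obstacle is the middle paragraph — showing that strong monotonicity (the form of $P_{\Upsilon}$ actually used by the SVI proof) survives the merging of several levels into one block — and it is precisely to close this gap that Assumption~\ref{assumption:monotone} is required.
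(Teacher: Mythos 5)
Your proposal follows essentially the same route as the paper's own proof: flatten the levels within each of the two consecutive blocks into a 2-level meta-game, show that \textsc{HH-BRD} on the original game and \textsc{SH-BRD} on the meta-game generate identical trajectories from the same initialization, and then invoke Theorem~\ref{thm:sep_alg_conv}. Your middle paragraph, which transports monotonicity and strong monotonicity through the aggregation maps to verify that the meta-operators actually satisfy the hypotheses of Theorem~\ref{thm:sep_alg_conv} (and which explains where Assumption~\ref{assumption:monotone} enters), is in fact more explicit than the paper's proof, which simply asserts this transfer under Assumptions 1--3; this is a refinement of the same argument rather than a different approach.
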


\begin{proof}[Proof Sketch]
We first ``flatten'' the game within each meta-level to obtain an effective 2-level game.
We then use Theorem \ref{thm:sep_alg_conv} to show this 2-level game converges to the unique NE of the game under SH-BRD. 
Finally, we prove that \textsc{SH-BRD} and \textsc{HH-BRD} have the same trajectory given the same initialization, thus establishing the convergence for \textsc{HH-BRD}.
 For full proof see Supplement, Appendix \ref{appendix:proof_thm_hybrid}.
\end{proof}

HH-BRD combines the advantages of both MS-BRD and SH-BRD: not only does it exploit the sparsity embedded in the network topology, but it also avoids the convergence problem of SH-BRD when the number of levels is higher than three.
Indeed, there is a known challenge in the related work on structured variational inequalities that convergence is difficult when we involve three or more operators \citep{He2009ParaALM}, which we leverage for our convergence results, with operators mapping to levels in our multi-scale game representation.
One may be concerned that \textsc{HH-BRD} pseudocode appears to involve greater complexity (and more steps) than \text{SH-BRD}.
However, this does not imply greater algorithmic complexity, but is rather due to our greater elaboration of the steps within each super level.
Indeed, as our experiments below demonstrate, the superior theoretical convergence of \textsc{HH-BRD} also translates into a concrete computational advantage of this algorithm.


\section{Numerical Results and Analysis} \label{sec:numerical}

In this section, we numerically compare the three algorithms  introduced in Section \ref{sec:algorithms}, as well as the conventional BRD.
We only consider settings which satisfy Assumptions 1-3; consequently, we focus comparison on computational costs.
We use two measures of computational cost: floating-point operations (FLOPs) in the case of games with a linear best response (a typical measure for such settings), and CPU time for the rest.
All experiments were performed on a machine with A 6-core 2.60/4.50 GHz CPU with hyperthreaded cores, 12MB Cache, and 16GB RAM.

\smallskip
\noindent{\bf Games with a Linear Best Response (GLBRs) }
GLBRs \citep{bramoulle2014strategic, candogan2012optimal, IDS-Games} 
feature utility functions 
such that an agent's best response is a linear function of its neighbors' actions. This includes quadratic utilities of the form
\begin{equation}\label{eqn:glbr}
    u_i(x_i, x_{I_i}) = a_i + b_i x_i + \bigg(\sum_{j \in I_i} g_{ij} x_j\bigg) x_i - c_i x_i^2, 
\end{equation}
since an agent's best response is:
\begin{equation*}
    BR_i(x_{I_i}, u_i) = \frac{ \sum_{j \in I_i} g_{ij} x_j }{2 c_i} - b_i.
\end{equation*}

We consider a 2-level GLBR and compare three algorithms: \textsc{BRD} (baseline), \textsc{MS-BRD}, and \textsc{HS-BRD} (note that in 2-level games, \textsc{HH-BRD} is identical to \textsc{HS-BRD}, and we thus don't include it here).
We construct random 2-level games with utility functions based on Equation~\eqref{eqn:glbr}.
Specifically, we generalize this utility so that Equation~\eqref{eqn:glbr} represents only the level-1 portion, $u_i^{(1)}$, and let the level-2 utilities be
\[
u_k^{(2)}(x_k,\pmb{x}_{I_k}) = x_k^{(2)}\sum_{p \ne k} v_{kp}x_p^{(2)}
\]
for each group $k$. 
At every level, the existence of a link between two agents follows the Bernoulli distribution where $P_{exist} = 0.1$. If a link exists, we then generate a parameter for it.
The parameters of the utility functions are sampled uniformly in $[0,1]$ without requiring symmetry. Please refer to Appendix \ref{appendix:numerical_data} and \ref{appendix:linear_data} for further details.
Results comparing \textsc{BRD}, \textsc{MS-BRD}, and \textsc{SH-BRD} are shown in Table~\ref{table:num_flops_sparse}.
We observe dramatic improvement in the scalability of using \textsc{MS-BRD} compared to conventional \textsc{BRD}.
This improvement stems from the representational advantage provided by multi-scale games compared to conventional graphical games (since without the multi-scale representation, we have to use the standard version of \textsc{BRD} for equilibrium computation).
We see further improvement going from \textsc{MS-BRD} to \textsc{SH-BRD} which makes algorithmic use of the multi-scale representation.

\begin{table}[htbp]
	\begin{center}
		\begin{tabular}{p{0.02\textwidth} p{0.125\textwidth} p{0.125\textwidth} p{0.125\textwidth}}
			\toprule
			\textbf{Size} & \textbf{BRD} & \textbf{MS-BRD} & \textbf{SH-BRD}\\
			\midrule
			$30^2$ & (2.51$\pm$0.18)$\times 10^6$ & (1.03$\pm$0.07)$\times 10^5$
			& \textbf{(9.81$\pm$0.81)$\times 10^4$}
			\\
			
			\midrule
			$50^2$ & (2.53$\pm$0.18)$\times 10^7$ & (5.33$\pm$0.04)$\times 10^5$
			& \textbf{(4.35$\pm$0.07)$\times 10^5$}
			\\
			
			\midrule
			$100^2$ & (4.46$\pm$0.32)$\times 10^8$ & (4.36$\pm$0.31)$\times 10^6$
			& \textbf{(3.56$\pm$0.29)$\times 10^6$}
			\\
			
			\midrule
			$200^2$ & (6.73$\pm$0.58)$\times 10^9$ & (3.48$\pm$0.29)$\times 10^7$
			& \textbf{(2.79$\pm$0.21)$\times 10^7$}
			\\
			
			\midrule
			$500^2$ & (2.84$\pm$0.21)$\times 10^{11}$ & (5.69$\pm$0.41)$\times 10^8$
			& \textbf{(4.04$\pm$0.29)$\times 10^8$}
			\\
			
			\bottomrule
		\end{tabular}
	\end{center}
	
	\caption{Convergence and complexity (flops) comparison  with linear best response under multiple initialization.}\label{table:num_flops_sparse}
\end{table}

\smallskip
\noindent{\bf Games with a Non-Linear Best Response }
Next, we study the performance of the proposed algorithms in 2- and 3-level games, with the same number of groups in each level (we systematically vary the number of groups).
Since \textsc{SH-BRD} and \textsc{HH-BRD} are identical in 2-level games, the latter is only used in 3-level games.
All results are averaged over 30 generated sample games.
The non-linear best response fits a much broader class of utility functions than the linear best response. The best responses generally don't have closed-form representations. In this case, we can't use linear equations to find the best response and instead have to apply gradient-based methods. 
In our instances, the utility with non-linear best responses is generated by adding an exponential cost term to the utility function used in GLBRs. Please refer to Appendix \ref{appendix:numerical_data} and \ref{appendix:nonlinear_data} for further details.

\begin{table}[htbp]
	\begin{center}
		\begin{tabular}{p{0.05\textwidth} p{0.11\textwidth} p{0.11\textwidth} p{0.11\textwidth}}
			
			\toprule
			\textbf{Size} & \textbf{BRD} & \textbf{MS-BRD} & \textbf{SH-BRD}\\
			\midrule
			$30^2$ & 1.50$\pm$0.05 & 1.02$\pm$0.02 & \textbf{0.54$\pm$0.01}
			\\
			
			\midrule
			$50^2$ & 26.70$\pm$0.36 & 3.70$\pm$0.14
			& \textbf{1.81$\pm$0.04}
			\\
			
			\midrule
			$100^2$ & 1512$\pm$9 & 23.81$\pm$0.69
			& \textbf{12.10$\pm$0.13}
			\\
			
			\midrule
			$200^2$ & $>18000$ & 287.2$\pm$5.4 & \textbf{133.6$\pm$2.5}
			\\
			
			\midrule
			$500^2$ & nan & 5485$\pm$13 & \textbf{2524$\pm$10}
			\\
			
			\bottomrule
		\end{tabular}
	\end{center}
	
	\caption{CPU times on a single machine on 2-Level games with general best response functions; all times are in seconds.}\label{table:L2_CPU_sparse}
\end{table}

Table \ref{table:L2_CPU_sparse} shows the CPU time comparison between all algorithms.
The scalability improvements from our proposed algorithms are substantial, with orders of magnitude speedup in some cases (e.g., from $\sim$ 25 minutes for the \textsc{BRD} baseline, down to $\sim$ 12 seconds for \textsc{SH-BRD} for games with 10K agents).
Furthermore, \textsc{BRD} fails to solve instances with 250K agents, which can be solved by \textsc{SH-BRD} in $\sim$ 42 min.
Again, we separate here the representational advantage of multi-scale games, illustrated by \textsc{MS-BRD}, and algorithmic advantage that comes from \textsc{SH-BRD}.
Note that \textsc{SH-BRD}, which takes full advantage of the multi-scale structure, also exhibits significant improvement over \textsc{MS-BRD}, yielding a factor of 2-3 reduction in runtime.

%

			
			
			
			
			
			
	

\begin{table}[htbp]
	\begin{center}
		\begin{tabular}{p{0.03\textwidth} p{0.11\textwidth} p{0.11\textwidth} p{0.12\textwidth}}
			
			\toprule
			\textbf{Size} & \textbf{BRD} & \textbf{MS-BRD} & \textbf{SH-BRD}\\
			\midrule
			$30^2$ & 1.21$\pm$0.04 & 0.63$\pm$ 0.01 & \textbf{0.037$\pm$0.003}
			\\
			
			\midrule
			$50^2$ & 23.88$\pm$0.16 & 1.99$\pm$0.04
			& \textbf{0.079$\pm$0.004}
			\\
			
			\midrule
			$100^2$ & 1461$\pm$14 & 15.49$\pm$0.24
			& \textbf{0.304$\pm$0.006}
			\\
			
			\midrule
			$200^2$ & $>18000$ & 192.0$\pm$1.2 & \textbf{1.87$\pm$0.05}
			\\
			
			\midrule
			$500^2$ & nan & 4258$\pm$56 s & \textbf{28.79$\pm$0.37}
			\\
			
			\bottomrule
		\end{tabular}
	\end{center}
	
	\caption{CPU times on a single machine for 2-Level, linear/nonlinear best-response games; all times are in seconds.}\label{table:Lin+Nonlin_CPU_sparse}
\end{table}
Our next set of experiments involves games in which level-1 utility has a linear best response, but level-2 utility has a non-linear best response.
The results are shown in Table~\ref{table:Lin+Nonlin_CPU_sparse}.
We see an even bigger advantage of \textsc{SH-BRD} over the others: it is now typically orders of magnitude faster than even \textsc{MS-BRD}, which is itself an order of magnitude faster than \textsc{BRD}.
For example, in games with 250K agents, in which \textsc{BRD} fails to return a solution, \textsc{MS-BRD} takes more than 1 hour to find a solution, whereas \textsc{SH-BRD} finds a solution in under 30 seconds.


\begin{table}[htbp]
	\begin{center}
		\begin{tabular}{p{0.02\textwidth} p{0.085\textwidth} p{0.085\textwidth} p{0.085\textwidth} p{0.085\textwidth}}
			
			\toprule
			\textbf{Size} 
			& \textbf{BRD} & \textbf{MS-BRD} & \textbf{SH-BRD} & \textbf{HH-BRD}\\
			\midrule
			$10^3$ & 1.23$\pm$0.03 & 0.59$\pm$0.01 & 0.76$\pm$0.03 & \textbf{0.43$\pm$0.02}
			\\
			
			\midrule
			$20^3$ & 696.0$\pm$8.7 & 3.78$\pm$0.09 & 6.05$\pm$0.08 & \textbf{3.35$\pm$0.09}
			\\
			
			\midrule
			$30^3$ & $>$ 18000 & 15.70$\pm$0.11 & 25.13$\pm$0.14
			& \textbf{13.39$\pm$0.11}
			\\
			
			\midrule
			$50^3$ & nan & 68.59$\pm$0.75 & 138.8$\pm$1.1
			& \textbf{57.98$\pm$0.69}
			\\
			
			\midrule
			$100^3$ & nan & 1126$\pm$6 & 2343$\pm$21
			& \textbf{877.1$\pm$11.5}
			\\
			
			\bottomrule
		\end{tabular}
	\end{center}
	
	\caption{CPU times in seconds on a single machine on 3-Level, general best response games; all times are in seconds.}	\label{table:L3_CPU_sparse}
\end{table}
Finally, Table~\ref{table:L3_CPU_sparse} presents the results of 
\textsc{HH-BRD} in games with $>2$ levels compared to \textsc{SH-BRD}, which does not provably converge in such games. In this case, \textsc{HH-BRD} outperforms the other  alternatives, with up to 22\% improvement over \textsc{MS-BRD}; indeed, we find that \textsc{SH-BRD} is considerably worse even than \textsc{MS-BRD}.

\section{Conclusions and Future Directions} \label{sec:conclusion}

We proposed a novel representation of games that have a multi-scale network structure.
These generalize network games, but with special structures that agent utilities are additive across the levels of hierarchy, with utility at each level depending only on the aggregate strategies of other groups.
We present several iterative algorithms that make use of the multi-scale game structure, and show that they converge to a pure strategy Nash equilibrium under similar conditions as  for best response dynamics in network games.
Our experiments demonstrate that the proposed algorithms can yield orders of magnitude scalability improvement over conventional best response dynamics. 
Our multi-scale algorithms can reveal to what extent one’s group affiliation impacts one’s strategic decision making, and how strategic interactions among groups impact strategic interactions among individuals. 

While the issue of multi-scale networks abounds in the network science literature (e.g., hierarchical clustering, etc.), the “multi-scale” part is primarily concerned with community structure in networks, rather than modeling how \textit{how communities interact}, which is critical for us in describing a formal multi-scale structure for games.  Thus 
a very important future direction is to identify and obtain relevant field data for experiments, and create realistic benchmarks for multi-scale games.  This would involve identifying ways to obtain data about how communities (and not just individuals) interact. 
%
Once we have the ability to collect data about interactions at multiple scales (e.g., among members and among groups), 
 we can apply our algorithms to such multi-scale networks.
 To use criminal networks (criminal organizations and their members) as an example, given game models constructed with the help of domain expertise, we can:
\begin{enumerate}
    \item compute equilibria predicting, say, criminal activity as a function of structural changes to organizations;
    \item infer utility models from observational data at multiple scales;
    \item study policies (including strengthening or weakening connections between agents or groups, endowing agents/groups with more resources (lower costs of effort), etc.) that would induce more desirable equilibrium outcomes.
\end{enumerate}

\section*{Acknowledgment}
This work is supported by the NSF under grants CNS-1939006, CNS-2012001, IIS-1905558 (CAREER) and by the ARO under contract W911NF1810208. 

\bibliographystyle{aaai}
\bibliography{papers}

\appendix

\clearpage
\begin{center}
    \textbf{\Large{Appendices}}    
\end{center}

\section{Structured Variational Inequalities}

%

A structured variational inequality SVI$_n$ arises when a VI problem has $n$ separable operators. This is used to analyze our game under the multi-scale perspective described in Section \ref{sec:model}. 

%
We now introduce a particular type of SVI$_2$ relevant to our model. 
Suppose the $N$ level-1 agents form $M$ disjoint groups in the game and $S_j$ denotes the $j$th level-1 group, whereby $i \in S_j$ denotes that $a_i$ is a member of $S_j$. Consider the following utility function of $a_i$: 
\begin{equation} \label{eqn:SVI2_utility}
    u_i(x_i, \pmb{x}_{-i}, y_j, \pmb{y}_{-j}) = u^{(1)}_i(x_i, \pmb{x}_{-i}) + u^{(2)}_j(y_j, \pmb{y}_{-j}),
\end{equation}
where $\pmb{x} \in \mathbf{R}^N$ denotes the level-1 action profile and $\pmb{y} \in \mathbf{R}^M$ denotes the level-2 action profile, and $A\pmb{x} + \pmb{y} = \pmb{0}$, for
\begin{equation*}
A_{ji} = \bigg \{ \begin{aligned}
           & -1, & \text{if } i \in S_j\\
                    & 0, & \text{else}
                \end{aligned} ~~, j = 1,\dots,M, i = 1,\dots,N~. 
\end{equation*}
Thus $A\pmb{x} + \pmb{y} = \pmb{0}$ is equivalent to $y_j = \sum_{i \in S_j} x_i$. We say $\pmb{x}$ and $\pmb{y}$ are two separated operators, and define
\begin{align} \label{eqn:SVI2_formation}
    & F^{(1)}(\pmb{x}) := \bigg( - \triangledown_{x_i} u^{(1)}_i(\pmb{x}) \bigg)_{i=1}^{N}, ~~x_i \in K^{(1)}_i, \nonumber \\
    & F^{(2)}(\pmb{y}) := \bigg( - \triangledown_{y_j} u^{(2)}_j(\pmb{y}) \bigg)_{j=1}^{M}, ~~y_j \in K^{(2)}_j,  \nonumber \\
    & K^{(1)} = \prod_{i=1}^N K^{(1)}_i, K^{(2)} = \prod_{j=1}^M K^{(2)}_j, \overline{K} = K^{(1)} \times K^{(2)},\nonumber \\
    & \pmb{v} = \begin{bmatrix}
        \pmb{x} \\ \pmb{y}
        \end{bmatrix} \in \overline{K},~~ 
    \overline{F}(\pmb{v}) = \begin{bmatrix} F^{(1)}(\pmb{x}) \\ F^{(2)}(\pmb{y}) \end{bmatrix}.
\end{align}

Define $\Omega = \{v \in \overline{K} | A\pmb{x} + \pmb{y} = \pmb{0} \}$. Then the VI($\Omega, \overline{F}$) problem is to find $v^* \in \Omega$, such that: 
$	(\pmb{v} - \pmb{v}^*)^T \overline{F}(\pmb{v}) \geq 0, ~ \forall \pmb{v} \in \Omega.$
This problem is equivalent to the SVI$_2$ problem VI($\mathcal{W}, Q$) defined in Eqn (\ref{eqn:SVI_problem})
\begin{equation} \label{eqn:SVI_problem}
	(\pmb{\omega} - \pmb{\omega}^*)^T Q(\pmb{\omega}) \geq 0, ~ \forall \pmb{\omega} \in \mathcal{W},
\end{equation}
where, $\mathcal{W} = \overline{K} \times \mathbf{R}^M$ and
\begin{equation} \label{eqn:SVI_notation}
    \pmb{\omega} = \begin{pmatrix}
		\pmb{x} \\ \pmb{y} \\ \pmb{\lambda}
		\end{pmatrix}, 
		Q(\pmb{\omega}) = \begin{pmatrix}
		F^{(1)}(\pmb{x}) - A^T \pmb{\lambda} \\ F^{(2)}(\pmb{y}) - \pmb{\lambda} \\ A \pmb{x} + \pmb{y} \end{pmatrix}.
\end{equation}
%
It is easy to see that if we use $\sum_{i \in S_j} x_i$ to replace $y_j$, then we again have a single operator $\pmb{x}$ and can construct a VI($K,F$) as outlined in Section \ref{sec:preliminaries}. There is a one-to-one mapping between a solution $\pmb{x}^*$ to VI($K,F$) and a solution $\pmb{\omega}^* = (\pmb{x}^*, - A\pmb{x}^*, \pmb{\lambda}^*)$ to VI($\mathcal{W}, Q$). Therefore, solving either VI($K,F$) or VI($\mathcal{W}, Q$) finds the set of NEs.

\section{Uniqueness of NE}

We will introduce some special matrices before we move on to the sufficient conditions for the uniqueness of NE.

\begin{definition} \label{defn:matrices}
	Some special matrices:
	\begin{enumerate}
		\item P-matrix: A square matrix is a P-matrix if all its principal components have positive determinant
		\item Z-matrix: A square matrix is a Z-matrix if all its off-diagonal components are nonpositive
		\item M-matrix: An M-matrix is a Z-matrix whose eigenvalues' real parts are nonnegative
		\item L-matrix: An L-matrix is a Z-matrix whose diagonal elements are nonnegative
	\end{enumerate}
\end{definition}

For an arbitrary mapping $F: \mathbf{R}^N \rightarrow \mathbf{R}^N$, we denote the Jacobian of $F(\pmb{x})$ as $JF(\pmb{x})$. And then $\triangledown_j F_i = [JF(\pmb{x})]_{ij}$

Checking if a matrix is P-matrix or not is still not trivial, and we can look at the spectral radius of a matrix instead.

\begin{theorem} \label{thm:P_gamma}
	\textbf{The $P_{\Gamma}$ condition}:
	
	We define the $\Gamma$ matrix generated from $F$ as follows
	\begin{equation} \label{eqn:gamma}
	\Gamma(F) = \begin{bmatrix}
	0 & -\frac{\beta_{1,2}(F)}{\alpha_1(F)} & \cdots & -\frac{\beta_{1,N}(F)}{\alpha_1(F)} \\
	-\frac{\beta_{2,1}(F)}{\alpha_2(F)} & 0 & \cdots & -\frac{\beta_{2,N}(F)}{\alpha_2(F)} \\
	\vdots & \vdots & \ddots & \vdots \\
	-\frac{\beta_{N,1}(F)}{\alpha_N(F)} & -\frac{\beta_{N,2}(F)}{\alpha_N(F)} & \cdots & 0
	\end{bmatrix},
	\end{equation}
	if the spectral radius $\rho(\Gamma(F)) = ||\Gamma(F)||_2 < 1$, then we say $F$ satisfies the $P_\Gamma$ condition. Then $P_{\Gamma}$ condition $\Leftrightarrow$ $P_{\Upsilon}$ condition and VI$(K,F)$ has a unique solution.
\end{theorem}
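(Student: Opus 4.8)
The plan is to reduce the claimed equivalence to classical facts about M-matrices and then obtain uniqueness for free from Theorem~\ref{T:BRD}. The starting point is the algebraic relationship between the two comparison matrices. Writing $D = \mathrm{diag}(\alpha_1(F),\dots,\alpha_N(F))$, which is well-defined and invertible since $\alpha_i(F) > 0$ is implicit whenever $\Gamma(F)$ is defined, a direct entrywise check shows that $D^{-1}\Upsilon(F) = I + \Gamma(F)$, equivalently $\Upsilon(F) = D\,(I + \Gamma(F))$. Thus $\Gamma(F)$ is exactly $\Upsilon(F)$ with each row rescaled by $1/\alpha_i(F)$ and the diagonal shifted to zero.

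Next I would exploit the sign structure. Since $\beta_{i,j}(F) \geq 0$, the off-diagonal entries of $\Upsilon(F)$ are nonpositive and its diagonal entries $\alpha_i(F)$ are positive, so $\Upsilon(F)$ is a Z-matrix with positive diagonal. Writing $\Upsilon(F) = D - B$ with $B \geq 0$ of zero diagonal ($B_{ij} = \beta_{i,j}(F)$ for $i \neq j$), one has $D^{-1}B = -\Gamma(F)$, and since negating a matrix negates its eigenvalues without changing their moduli, $\rho(D^{-1}B) = \rho(\Gamma(F))$. The key step is to invoke the classical characterizations from nonnegative-matrix theory: for a Z-matrix, being a P-matrix is equivalent to being a nonsingular M-matrix; and a Z-matrix of the form $D - B$ with $D > 0$ diagonal and $B \geq 0$ is a nonsingular M-matrix if and only if $\rho(D^{-1}B) < 1$. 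Chaining these yields
\begin{align*}
P_{\Upsilon} &\Leftrightarrow \Upsilon(F)\text{ is a P-matrix} \Leftrightarrow \Upsilon(F)\text{ is a nonsingular M-matrix} \\
&\Leftrightarrow \rho(D^{-1}B) < 1 \Leftrightarrow \rho(\Gamma(F)) < 1 \Leftrightarrow P_{\Gamma}.
\end{align*}

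The uniqueness claim then requires no new work: once the equivalence $P_{\Gamma} \Leftrightarrow P_{\Upsilon}$ is in hand, Theorem~\ref{T:BRD} immediately gives that under the $P_{\Upsilon}$ (hence $P_{\Gamma}$) condition, $F$ is strongly monotone on $K$ and VI$(K,F)$ has a unique solution.

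I expect the main obstacle to be bookkeeping rather than conceptual: carefully tracking the sign conventions so that the nonpositive off-diagonal entries of $\Gamma(F)$ are correctly reconciled with the nonnegative matrix $B = -D\,\Gamma(F)$ to which the M-matrix/Perron--Frobenius machinery applies. I would also be careful that the operative hypothesis is $\rho(\Gamma(F)) < 1$ on the spectral radius; the stated identity $\rho(\Gamma(F)) = \|\Gamma(F)\|_2$ holds only for normal $\Gamma(F)$, so the proof should rest on the spectral-radius characterization of nonsingular M-matrices rather than on the induced $2$-norm.
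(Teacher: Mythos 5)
Your proof is correct, but note that the paper itself never actually proves this statement: Theorem~\ref{thm:P_gamma} is presented in the appendix as an imported result (it is the uniqueness condition of \cite{scutari2014real}), and the only argument supplied nearby, Theorem~\ref{thm:PM2DD}, establishes a strictly weaker, one-directional claim---that (weakly chained) diagonal dominance of $\Upsilon$ implies the P-matrix property. Your argument therefore fills in the missing proof, and it does so with exactly the toolkit the paper sets up in Definition~\ref{defn:matrices}: writing $\Upsilon(F) = D - B$ with $D = \mathrm{diag}(\alpha_1(F),\dots,\alpha_N(F))$ and $B \geq 0$ the matrix of the $\beta_{i,j}(F)$, noting $\Gamma(F) = -D^{-1}B$ so that $\rho(\Gamma(F)) = \rho(D^{-1}B)$, and then chaining two classical facts: a Z-matrix is a P-matrix iff it is a nonsingular M-matrix, and $D - B$ is a nonsingular M-matrix iff the Jacobi matrix $D^{-1}B$ has spectral radius below one. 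Both facts are standard Perron--Frobenius/M-matrix theory, and the diagonal rescaling step you rely on is harmless because multiplication by a positive diagonal matrix preserves the sign of every principal minor; your observation that $P_{\Upsilon}$ itself forces $\alpha_i(F)>0$ (the $1\times 1$ principal minors) keeps the equivalence well-posed. The uniqueness clause then indeed costs nothing, via Theorem~\ref{T:BRD}. Finally, your closing caveat flags a genuine imprecision in the paper's statement: $\rho(\Gamma(F)) = \|\Gamma(F)\|_2$ holds only for normal $\Gamma(F)$, which is not guaranteed here (the paper's own experiments draw asymmetric weights), so the hypothesis must be read as a spectral-radius condition---precisely the reading your proof uses.
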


In \cite{scutari2014real}, the authors mentioned that the $P_{\Upsilon}$ captures ``some kind of diagonal dominance''. In fact, the strong diagonal dominance(s.d.d) or weakly chained diagonal dominance(w.c.d.d) of $\Upsilon$ can be an easier yet sufficient condition to check.

\begin{theorem} \label{thm:PM2DD}
	If $\Upsilon$ is s.d.d or w.c.d.d, the NE is unique, since
	\begin{align*}
	&~ \text{$\Upsilon$ is an s.d.d L-matrix}\\
	\Rightarrow &~ \text{$\Upsilon$ is a w.c.d.d L-matrix}\\
	\Leftrightarrow &~ \text{$\Upsilon$ is a nonsigular weakly diagonally dominant(w.d.d)}\\
	&~ \text{L-matrix}\\
	\Leftrightarrow &~ \text{$\Upsilon$ is a nonsigular w.d.d M-matrix}\\
	\Rightarrow &~ \text{$\Upsilon$ is a P-matrix}
	\end{align*}
\end{theorem}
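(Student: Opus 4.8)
The plan is to read the theorem as a chain of purely matrix-theoretic implications whose only game-theoretic input is the structural observation that the $\Upsilon$ matrix of Definition~\ref{thm:P_upsilon} is automatically an L-matrix. Indeed, by construction its diagonal entries are $\alpha_i(F) = \inf_{\pmb{x}} \|\triangledown_i F_i\|_2 \geq 0$ and its off-diagonal entries are $-\beta_{i,j}(F) = -\sup_{\pmb{x}} \|\triangledown_j F_i\|_2 \leq 0$, so $\Upsilon$ is a Z-matrix with nonnegative diagonal, i.e., an L-matrix in the sense of Definition~\ref{defn:matrices}. Thus every assertion in the displayed chain concerns one and the same Z-matrix, and the hypotheses ``s.d.d'' and ``w.c.d.d'' are merely dominance conditions imposed on it. Since the endpoint ``$\Upsilon$ is a P-matrix'' is precisely the $P_{\Upsilon}$ condition, the final conclusion of uniqueness of the NE follows immediately by invoking Theorem~\ref{T:BRD}.

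First I would dispatch the trivial top link: a strictly diagonally dominant matrix satisfies weak dominance with strict inequality in \emph{every} row, so the chain condition defining w.c.d.d is vacuously met, each row serving as its own strictly dominant terminus. Next I would establish the first equivalence by citing, for the forward direction, the classical nonsingularity theorem for weakly chained diagonally dominant matrices (w.c.d.d implies nonsingular, and w.c.d.d implies w.d.d by definition), and for the converse the more recent characterization (due to Azimzadeh and Bayraktar) that a \emph{nonsingular} w.d.d L-matrix is necessarily w.c.d.d. The second equivalence, between nonsingular w.d.d L-matrices and nonsingular w.d.d M-matrices, I would obtain from the standard characterization of M-matrices among Z-matrices (Berman--Plemmons): once the common Z-structure with nonnegative diagonal is fixed, a nonsingular w.d.d Z-matrix is exactly a nonsingular M-matrix, and both directions are immediate.

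The final link, nonsingular w.d.d M-matrix $\Rightarrow$ P-matrix, is the classical fact that every nonsingular M-matrix has all principal minors positive; I would cite this from the list of equivalent M-matrix conditions and note that each principal submatrix of a nonsingular M-matrix is again a nonsingular M-matrix, hence has positive determinant. Stringing the links together yields $\Upsilon \text{ s.d.d or w.c.d.d} \Rightarrow \Upsilon \text{ is a P-matrix}$, and applying Theorem~\ref{T:BRD} through the $P_{\Upsilon}$ condition closes the argument.

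The main obstacle I anticipate is the converse half of the first equivalence, namely that a nonsingular w.d.d L-matrix must actually be w.c.d.d, rather than merely asserting the easier forward direction. This does not follow from the elementary Taussky/Ostrowski-style nonsingularity arguments and genuinely requires the graph-theoretic characterization of nonsingularity for weakly dominant matrices. I would therefore take care that the sign convention of $\Upsilon$ (a Z-structure with nonnegative, but possibly zero, diagonal, and with no irreducibility guaranteed) matches the hypotheses of whichever form of that characterization I cite, since several standard statements tacitly assume a strictly positive diagonal or an irreducible pattern that $\Upsilon$ need not possess.
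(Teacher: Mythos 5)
Your proposal is correct and follows essentially the same route as the paper: the paper's own justification \emph{is} the displayed chain of implications (s.d.d $\Rightarrow$ w.c.d.d $\Leftrightarrow$ nonsingular w.d.d L-matrix $\Leftrightarrow$ nonsingular w.d.d M-matrix $\Rightarrow$ P-matrix), concluding uniqueness via the $P_{\Upsilon}$ condition and Theorem~\ref{T:BRD}, and you have simply filled in each link with the appropriate standard matrix-theoretic citations. Your added observations --- that $\Upsilon$ is automatically an L-matrix by construction, and that the converse direction of the first equivalence needs the graph-theoretic characterization of nonsingular w.d.d L-matrices rather than an elementary dominance argument --- are accurate refinements of what the paper leaves implicit.
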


Also, when $\Upsilon$ is s.d.d, $\Gamma$ is a (right, row) substochastic matrix and thus $\rho(\Gamma) < 1$ trivially holds and the NE is unique.

The $P_{\Upsilon}$ condition guarantees both the uniqueness of NE and the convergence of BRD. Please refer to \cite{parise2019variational} for more conditions on the uniqueness.

\section{Proof of Theorem \ref{thm:sep_alg_conv}} \label{appendix:proof_thm_sep}
\begin{proof}
This algorithm is designed to solve the SVI problem presented in Eqn (\ref{eqn:SVI_problem}) and (\ref{eqn:SVI_notation}).
%
	We denote $H = \frac{1}{2}\textbf{diag}(\pmb{h})$, and the norm $||\pmb{x}||_{G}$, where $G \succ \pmb{0}$ as
	\begin{equation*}
		||\pmb{x}||_{G} = \pmb{x}^T G \pmb{x}.
	\end{equation*}
    
    For simplicity reason, we will use $\pmb{x}$ and $\pmb{y}$ to replace $\pmb{x}^{(1)}$ and $\pmb{x}^{(2)}$ in the remainder of the proof.
    
	We can rewrite the steps in Algorithm \ref{alg:separated} as follows:
	\begin{itemize}
		
	\item Step 0: Initialization, given $\epsilon, \mu$ and $\pmb{x}_0$, let $t=0$, $\pmb{x}(0) = \pmb{x}_0$, $y_k(0) = \sigma_k(\pmb{x}_{S_k}(0))$; arbitrarily choose $\pmb{\lambda}(0)$.
	
	\item Step 1: Find $\pmb{x}^* \in K^{(1)}$ that solves
	\begin{equation} \label{eqn:level1_diff}
	(\pmb{x}' - \pmb{x}^*)^T \bigg[ f(\pmb{x}^*) - A^T[\pmb{\lambda}(t) - H(A\pmb{x}^*+ \pmb{y}(t))] \bigg] \geq 0,
	\end{equation}
	for $\forall \pmb{x}' \in K$, and set $\pmb{x}(t+1) = \pmb{x}^*$.
	
	\item Step 2: Find $\pmb{y}^* \in K^{(2)}$ that solves
	\begin{align} \label{eqn:level2_diff}
	(\pmb{y}' - \pmb{y}^*)^T \bigg[ f(\pmb{x}^*) - [\pmb{\lambda}(t) - H(A\pmb{x}(t+1)+\pmb{y}^*)] \bigg] \geq 0,
	\end{align}
	for $\forall \pmb{y}' \in K$, and set $\pmb{y}(t+1) = \pmb{y}^*$.
	
	\item Step 3: Set
	\begin{equation} \label{eqn:lambda_update_app}
	\pmb{\lambda}(t+1) = \pmb{\lambda}(t) - H(A\pmb{x}(t+1) - \pmb{y}(t+1))
	\end{equation}
	
	\item Step 4: Convergence verification: If $||\pmb{\omega}(t+1) - \pmb{\omega}(t)||_{\infty} < \epsilon$, then stop. Otherwise let $t \leftarrow t+1$ and go back to Step 1.
	
	\end{itemize}
	
	When we have $\pmb{y}(t+1) = \pmb{y}(t)$ and $\pmb{\lambda}(t+1) = \pmb{\lambda}(t)$, $\pmb{\omega}(t+1) = (\pmb{x}(t+1), \pmb{y}(t+1), \pmb{\lambda}(t+1))$ is the solution to our SVI$_2$. We denote the unique solution as $\pmb{\omega}^* = (\pmb{x}^*, \pmb{y}^*, \pmb{\lambda}^*)$. From Eqn (\ref{eqn:level2_diff}) and (\ref{eqn:lambda_update_app}), we have the following from Section 2 of \citep{He2009ParaALM},
	\begin{align} \label{eqn:alg3_contracttion}
		&~ ||\pmb{y}(t+1) - \pmb{y}^*||^2_H + ||\pmb{\lambda}(t+1) - \pmb{\lambda}^*||^2_{H^{-1}} \nonumber \\
		\leq &~ \bigg(||\pmb{y}(t) - \pmb{y}^*||^2_H + ||\pmb{\lambda}(t) - \pmb{\lambda}^*||^2_{H^{-1}} \bigg) \nonumber \\
		&~ - \bigg(||\pmb{y}(t+1) - \pmb{y}(t)||^2_H + ||\pmb{\lambda}(t+1) - \pmb{\lambda}(t)||^2_{H^{-1}} \bigg) \nonumber \\ 
		< &~ ||\pmb{y}(t) - \pmb{y}^*||^2_H + ||\pmb{\lambda}(t) - \pmb{\lambda}^*||^2_{H^{-1}},
	\end{align}
	which shows the contraction property of the sequence $\{(\pmb{y}(t), \pmb{\lambda}(t) ) \}$ and thus proves the convergence of the algorithm.
	
	A more detailed proof of convergence of the above steps in Eqn (\ref{eqn:level1_diff})-(\ref{eqn:lambda_update_app}) is covered in \citep{Gabay1976, Glowinski1985}, and a more generalized version of the above steps and convergence proofs are covered in \citep{Tseng1990, Lions1979}.
	
\end{proof}

\section{Proof of Theorem \ref{thm:hybrid_alg_conv}} \label{appendix:proof_thm_hybrid}

\subsection{Full version of HH-BRD}
We will first show the ull version of HH-BRD, suppose the superlevel partitions is taken between level $q-1$ and level $q$, then for $i = 1,\dots,N^{(1)}$,
\begin{align} \label{eqn:HH_BRD_util1}
	\hat{u}^{(sl_1)}_i = & \sum_{l=1}^{q-1} u^{(l)}_{k_{il}} (x_{k_{il}}, \pmb{x}_{I_{k_{il}}}) \nonumber \\
	& - L_{k_{iq}}^{(sl_1,sl_2)}\bigg( \sigma^{(1,q)}_{k_{iq}} ( \pmb{x}_{S^{(1,q)}_{k_{iq}}} ) , x^{(q)}_{k_{iq}} \bigg),
\end{align}
where
\begin{equation*}
    S^{(1,q)}_p = \{ a^{(1)}_i ~|~ k_{iq} = p \}, 
\end{equation*}
\begin{equation*}
    \sigma^{(1,q)}_p ( \pmb{x}_{S^{(1,q)}_p} ) = \sum_{a^{(1)}_i \in S^{(1,q)}_p} x^{(1)}_i.
\end{equation*}

And for $j = 1,\dots,N^{(q)}$
\begin{align} \label{eqn:HH_BRD_util2}
    \hat{u}^{(sl_2)}_j = & \sum_{l=q}^{L} u^{(l)}_{k_{jl}} (x_{k_{jl}}, \pmb{x}_{I_{k_{jl}}}) \nonumber \\
    & - L_j^{(sl_2,sl_1)}\bigg( x^{(q)}_j, \sigma^{(1,q)}_j ( \pmb{x}_{S^{(1,q)}_j} )  \bigg)~. 
\end{align}

Please refer to Algorithm \ref{alg:hybrid_full} for the pseudo code of the full version of this algorithm. The loss function updates are similar to that of Algorithm \ref{alg:separated}.
\begin{algorithm}[h]\label{alg:hybrid_full}
	\SetAlgoLined
	
	Initialize the game, $t=0, x^{(1)}_i(0)=(\pmb{x}_0)_i, i = 1,\dots,N^{(0)}$
	
	\For{l = 2:L}{
	    \For{k = 1:$N^{(l)}$}{
		    $\pmb{x}^{(l)}_k(0) = \sigma^{(l)}_k ( \pmb{x}_{S^{(l)}_k}(0) )$;\\
		}
	}
	\While{not converged}{
		\For{k = 1:$N^{(q)}$ (Meta-Level-1 Penalty Update)}{
			Update $L_k^{(sl_1, sl_2)}$
		}
		\For{$i = 1:N^{(1)}$ (Level-1/Meta-Level-1 Gaming)}{
			$x^{(1)}_i(t + 1) = BR_i \bigg(\pmb{x}^{(1)}_{I_i}(t), \pmb{x}^{(2)}_{I_{k_{i2}}}(t), \dots, x^{(3)}_{k_{iq}}(t), \hat{u}^{(sl_1)}_i \bigg)$\\
		}
		\For{l = 2:q-1 (Level-2 to Level-q Aggregation)}{
    		\For{j = 1:$N^{(l)}$}{
    			$\pmb{x}^{(l)}_j(t+1) = \sigma^{(l)}_j ( \pmb{x}_{S^{(l)}_j}(t+1) )$\\
    		}
		}
		\For{k = 1:$N^{(q)}$ (Meta-Level-2 Penalty Update)}{
			Update $L_k^{(sl_2, sl_1)}$
		}
		
		\For{$j = 1:N^{(q)}$ (Level-q/Meta-Level-2 Gaming)}{
		\begin{align*}
		    &~ x^{(q)}_j(t + 1) \\
		    = &~ BR_j \bigg( \sigma^{(1,q)}_j( \pmb{x}_{S_j^{(1,q)}} ), \pmb{x}^{(q)}_{I_j}(t), \pmb{x}^{(q+1)}_{I_{k_{j(q+1)}} }(t), \dots,\\
		    &~~~~~~~~~~~ \pmb{x}^{(L)}_{I_{k_{jL}}}(t), \hat{u}^{(sl_2)}_j \bigg)
		\end{align*}
		}
		
		\For{l = q+1:L (Level-2 to Level-q+1 Aggregation)}{
    		\For{p = 1:$N^{(l)}$}{
    			$\pmb{x}^{(l)}_p(t+1) = \sigma^{(l)}_p ( \pmb{x}_{S^{(l)}_p}(t+1) )$\\
    		}
		}
		
		$t \leftarrow t+1$;
	}
	\caption{Hybrid Hierarchical BRD(Full Version)} 
\end{algorithm}

\subsection{Proof of Theorem}
We will first construct an equivalent 2-level game to the $L$-level game where $L>2$, and then show that the action profile update trajectories are the same for the original game and he equivalent game. Finally, the convergence of the equivalent game follows Theorem \ref{thm:sep_alg_conv} and thus Algorithm \ref{alg:hybrid} guarantees convergence.

\begin{proof}
    
	
	
	We define the following counter-part for utility component $u^{(l)}_i(x_i^{(l)}, \pmb{x}^{(l)}_{I_i} )$  ($1 < l < q$)
	\begin{equation} \label{eqn:util_comp_sl1}
	    \underline{u}^{(l)}_i( \pmb{x}_{S_i^{(1,l)}}, \pmb{x}_{S_{I_i}^{(1,l)}} ) = u^{(l)}_i(x_i^{(l)}, \pmb{x}^{(l)}_{I_i} ),
	\end{equation}
	when $x_i^{(l)} = \sigma^{(1,l)}_i ( \pmb{x}_{S^{(1,l)}_i} ), \forall i, \forall l \in \{2, \dots, q-1\}$. Both $\pmb{x}_{S_i^{(1,l)}}$ and $\pmb{x}_{S_{I_i}^{(1,l)}}$ are level-1 action profiles. This is exactly how we create the utility functions under the flat perspective, where we expand the higher level aggregate actions down to level-1.
	
	Similarly, we define the following counter-part for utility component $u^{(l)}_j(x_j^{(l)}, \pmb{x}^{(l)}_{I_j} )$  ($q < l \leq L$)
	\begin{equation} \label{eqn:util_comp_sl2}
	    \underline{u}^{(l)}_j( \pmb{x}_{S_j^{(q,l)}}, \pmb{x}_{S_{I_j}^{(q,l)}} ) = u^{(l)}_j(x_j^{(l)}, \pmb{x}^{(l)}_{I_j} ),
	\end{equation}
	when $x_j^{(l)} = \sigma^{(q,l)}_j ( \pmb{x}_{S^{(q,l)}_j} ), \forall j, \forall l \in \{q, \dots, L\}$. Both $\pmb{x}_{S_i^{(1,l)}}$ and $\pmb{x}_{S_{I_i}^{(1,l)}}$ are level-$q$ action profiles. This time we expand the higher level aggregate actions down to level-$q$ instead of level-1.
	
	So then we can define a ``flattened'' super-level-1 utility function counterpart for $u^{(sl_1)}_i$ as follows
	\begin{align} \label{eqn:util_sl1_alt}
	    \overline{u}^{(sl_1)}_i (x^{(1)}_i, x^{(1)}_{\underline{I}_i} ) = & \sum_{l=1}^{q-1} \underline{u}^{(l)}_{k_{il}}( \pmb{x}_{S_{k_{il}}^{(1,l)}}, \pmb{x}_{S_{I_{k_{il}}}^{(1,l)}} ) \nonumber \\
	    & - L_{k_{iq}}^{(sl_1,sl_2)}\bigg( \sigma^{(1,q)}_{k_{iq}} ( \pmb{x}_{S^{(1,q)}_{k_{iq}}} ) , x^{(q)}_{k_{iq}} \bigg),
	\end{align}
	where
	\begin{equation*}
	    \underline{I}^{(sl_1)}_i = \{ a^{(1)}_j | k_{jq} = k_{iq}, j \neq i \}.
	\end{equation*}
	
	Similarly, for meta-level 2, we can define a ``flattened''(to level-q) function counterpart for $u^{(sl_2)}_j$ as follows 
	\begin{align} \label{eqn:util_sl2_alt}
	    \overline{u}^{(sl_2)}_j (x^{(q)}_j, x^{(q)}_{\underline{I}^{(sl_2)}_j} ) = & \sum_{l=q}^{L} \underline{u}^{(l)}_{k_{jl}}( \pmb{x}_{S_{k_{jl}}^{(q,l)}}, \pmb{x}_{S_{I_{k_{jL}}}^{(q,l)}} )\nonumber \\
        & - L_j^{(sl_2,sl_1)}\bigg( x^{(q)}_j, \sigma^{(1,q)}_j ( \pmb{x}_{S^{(1,q)}_j} )  \bigg),
	\end{align}
	where
	\begin{equation*}
	    \underline{I}^{(sl_2)}_j = \{ a^{(q)}_p | k_{pL} = k_{jL}, p \neq j \}.
	\end{equation*}
	
	So now we can create a 2-level game where the level-1(resp. level-q) agents in the original game become the level-1(resp. level-2) agents in the new game with utility functions defined in Eqn (\ref{eqn:util_sl1_alt}) (resp. Eqn (\ref{eqn:util_sl2_alt})). Based on Theorem \ref{thm:sep_alg_conv}, we know that if we apply SH-BRD, we can converge to the unique NE of the game under Assumptions 1-3.
	
	Then it remains to show that given the same initialization, applying HH-BRD in the original game and the MS-BRD in the new 2-level game generate the same level-1 action profile update trajectory. This can be shown using induction.
	
	We know from initialization that
	\begin{equation*}
	    x_i^{(l)}(0) = \sigma^{(1,l)}_i ( \pmb{x}_{S^{(1,l)}_i} (0) ), \forall i, \forall l \in \{2, \dots, q-1\},
	\end{equation*}
	\begin{equation*}
	    x_j^{(l)}(0) = \sigma^{(q,l)}_j ( \pmb{x}_{S^{(q,l)}_j} (0) ), \forall j, \forall l \in \{q, \dots, L\}.
	\end{equation*}
	Then based on Eqn (\ref{eqn:util_comp_sl1}), we know that
	\begin{align*}
	     &~ u^{(sl_1)}_i(x^{(1)}_i, \pmb{x}^{(1)}_{I_i}(0), \dots, x^{(q)}_{k_{iq}}(0)) \\
	    = &~ \overline{u}^{(sl_1)}_i (x^{(1)}_i, x^{(1)}_{\underline{I}_i}(0) ) \\
	    \Leftrightarrow ~&~ BR_i(\pmb{x}^{(1)}_{I_i}(0), \dots, x^{(q)}_{k_{iq}}(0), u^{(sl_1)}_i) \\
	    = &~ BR_i(x^{(1)}_{\underline{I}_i}(0), \overline{u}^{(sl_1)}_i),
	\end{align*}
	and thus when $t=1$, $\pmb{x}^{(1)}(t)$ are the same when applying HH-BRD in the original game and the MS-BRD in the new 2-level game. Similarly, $\pmb{x}^{(q)}(1)$ are the same based on Eqn (\ref{eqn:util_comp_sl2}).
	
	Suppose $\pmb{x}^{(1)}(t)$ and $\pmb{x}^{(q)}(t)$ are the same for the two dynamics for $t = 0, 1, \dots, T$, we need to show that $\pmb{x}^{(1)}(t)$ and $\pmb{x}^{(q)}(t)$ are the same for $t = T+1$ to complete the proof.
	
	Again, based on Eqn (\ref{eqn:util_comp_sl1}), we know that
	\begin{align*}
	    &~ u^{(sl_1)}_i(x^{(1)}_i, \pmb{x}^{(1)}_{I_i}(T), \dots, x^{(q)}_{k_{iq}}(T)) \\
	    = &~ \overline{u}^{(sl_1)}_i (x^{(1)}_i, x^{(1)}_{\underline{I}_i}(T) ) \\
	    \Leftrightarrow ~&~ BR_i(\pmb{x}^{(1)}_{I_i}(T), \dots, x^{(q)}_{k_{iq}}(T), u^{(sl_1)}_i) \\
	    = &~ BR_i(x^{(1)}_{\underline{I}_i}(T), \overline{u}^{(sl_1)}_i),
	\end{align*}
	which implies $\pmb{x}^{(1)}(T+1)$ are the same for the two dynamics and similarly 	$\pmb{x}^{(q)}(T+1)$ are the same based on Eqn (\ref{eqn:util_comp_sl2}).
	
\end{proof}



\section{Data Generation for Numerical Experiments} \label{appendix:numerical_data}
We introduce the data generation procedures for both games with linear best response and non-linear best response in this part.

First of all, for both type of games, we create an adjacency matrix for each of the groups on every level. This matrix has 0 diagonal elements and for the off-diagonal elements, the existence of a directed edge subjects to the Bernoulli distribution where there is a fixed $P_{exist}$. Then if a directed edge exist, the edge weight is generated by choosing a value from $[0,1]$ uniformly at random. Later, we will multiply these matrices with different scalars to adjust the values so that Assumption 3 holds. These matrices have 0 diagonal elements because they capture the dependencies of agents on each other, or equivalently, they are used to model the external impact the agents receive from the network. The internal impact are modeled by cost functions and marginal benefit terms that only depend on an agent's own action.

\subsection{Linear Best Response Games} \label{appendix:linear_data}
For games with linear best response, we generated a 2-level game with 100 groups and 10,000 level-1 agents. The adjacency matrix generation follows $P_{exist}=0.1$, which creates a rather sparse network. Each level-2 group $S^{(2)}_k$ contains 100 members, and we use $W_k$ to denote the corresponding adjacency matrix. We use $V$ to denote the level-2 adjacency matrix. From Eqn (\ref{eqn:utility}), we know that for each level-1 agent, the utility function is
\begin{equation*}
    u_i(x^{(1)}_i, \pmb{x}^{(1)}_{I_i}, \pmb{x}^{(2)}_{I_{k_{i2}}}) = u^{(1)}_i(x^{(1)}_i, \pmb{x}^{(1)}_{I_i}) + u^{(2)}_{k_{i2}} ( x^{(2)}_{k_{i2}}, \pmb{x}^{(2)}_{I_{k_{i2}}} ), 
\end{equation*}
where
\begin{align*}
    u^{(1)}_i (x^{(1)}_i, \pmb{x}^{(1)}_{I_i}) = &~ b_i x^{(1)}_i + x^{(1)}_i \bigg( \sum_{j \in I_i} (W_{k_{i2}})_{r_i r_j} x^{(1)}_j \bigg)\\ & - c_i (x^{(1)}_i)^2,
\end{align*}
\begin{equation*}
    u^{(2)}_k (x^{(2)}_{k_{i2}}, \pmb{x}^{(2)}_{I_{k_{i2}}}) = x^{(2)}_p \bigg( \sum_{p \neq k} V_{kp} x^{(2)}_p \bigg).
\end{equation*}
We choose the cost coefficients $c_i$ to be large enough so that the $\Upsilon(F)$ satisfies the $P_{\Upsilon}$ condition(from Appendix A, strong diagonal dominance implies $P_{\Upsilon}$ condition). In the experiments, the $\rho(\Gamma)$(Se Appendix A for $\Gamma$) has a value between $[0.7,0.8]$.

Then under the flat perspective, a level-1 agent $a^{(1)}_i$ has the following utility function
\begin{align*}
    u^{flat}_i(x^{(1)}_i, x^{(1)}_{-i}) = &~ b_i x^{(1)}_i + x^{(1)}_i \bigg( \sum_{j \neq i} W^{flat}_{ij} x^{(1)}_j \bigg) \\
    & - c_i (x^{(1)}_i)^2 + d_i,
\end{align*}
where
\begin{equation*}
    d_i = \sum_{j \in I_i} x^{(1)}_j \bigg( \sum_{p \notin S^{(2)}_{k_{i2}}} W^{flat}_{jp} x^{(1)}_p \bigg),
\end{equation*}
\begin{equation*}
    W^{flat} = \begin{bmatrix} 
    W_1 & V_{1,2} \cdot \pmb{1} & \cdots & V_{1,100} \cdot \pmb{1}\\
    V_{2,1} \cdot \pmb{1} & W_2 & \cdots & V_{2,100} \cdot \pmb{1}\\
    \vdots & \vdots & \ddots & \vdots\\
    V_{100,1} \cdot \pmb{1} & \cdots & V_{100,2} \cdot \pmb{1} & W_{100}
    \end{bmatrix},
\end{equation*}
here $\pmb{1}$ represents the all 1 matrix of suitable size(100$\times$100). 

\subsection{General Best Response Games} \label{appendix:nonlinear_data}
For games with general(non-linear) best response, we generated data using the graphical game model similarly like the above. However, this time we use a mixed cost term that is a weighted sum of a quadratic component and an exponential component. Therefore, we can no longer represent the best response functions as linear functions and the best response computing now relies on gradient based optimization steps. In the experiments shown in the main article, the adjacency matrix is generated following $P_{exist}=0.1$, which creates a sparse network. We also tried $P_{exist}=1$ and the results on the dense networks are included in this part of the appendix.

We use $W^{(l)}_i$ to denote the adjacency matrix within $S^{(l)}_i$ and $W^{(L+1)}$ to denote the adjacency matrix between highest level agents. For the 2-level games with general best response, the utility components are set as follows
\begin{align*}
    u^{(1)}_i (x^{(1)}_i, \pmb{x}^{(1)}_{I_i}) = &~ b_i x^{(1)}_i + x^{(1)}_i \bigg( \sum_{j \in I_i} (W^{(2)}_{k_{i2}})_{r_i r_j} x^{(1)}_j \bigg)\\ & - c_i (x^{(1)}_i)^2 - e^{ 0.1 x^{(1)}_i},
\end{align*}
\begin{align*}
    u^{(2)}_i (x^{(2)}_i, \pmb{x}^{(2)}_{I_i}) = &~ x^{(2)}_i \bigg( \sum_{j \neq i} (W^{(3)}_{k_{i3}})_{ij} x^{(2)}_j \bigg) \\
    & - |S^{(2)}_i| \cdot e^{ 0.1 x^{(2)}_i/|S^{(2)}_i|}.
\end{align*}

For 3-level games with general best response, the components in level-1 and 2 remain the same, and the level-3 components are 
\begin{align*}
    u^{(2)}_i (x^{(3)}_i, \pmb{x}^{(3)}_{I_i}) = &~ x^{(3)}_i \bigg( \sum_{j \neq i} W^{(4)}_{ij} x^{(3)}_j \bigg) \\
    & - |S^{(1,3)}_i| \cdot e^{ 0.1 x^{(3)}_i/|S^{(1,3)}_i|}.
\end{align*}

For the 2-level games with linear/nonlinear best response, the utility components are set as follows
\begin{align*}
    u^{(1)}_i (x^{(1)}_i, \pmb{x}^{(1)}_{I_i}) = &~ b_i x^{(1)}_i + x^{(1)}_i \bigg( \sum_{j \in I_i} (W^{(2)}_{k_{i2}})_{r_i r_j} x^{(1)}_j \bigg)\\ & - c_i (x^{(1)}_i)^2,
\end{align*}
\begin{align*}
    u^{(2)}_i (x^{(2)}_i, \pmb{x}^{(2)}_{I_i}) = &~ x^{(2)}_i \bigg( \sum_{j \neq i} (W^{(3)}_{k_{i3}})_{ij} x^{(2)}_j \bigg) \\
    & - |S^{(2)}_i| \cdot e^{ 0.1 x^{(2)}_i/|S^{(2)}_i|}.
\end{align*}

Again, the adjacency matrix and the cost terms will be scaled to ensure that Assumption \ref{assumption:unique} holds, and in the experiments, the $\rho(\Gamma)$(Se Appendix A for $\Gamma$) has a value between $[0.7,0.8]$.

Hyperparameter settings: besides the parameters in the graphical games, the parameter $h_i^{(l)}$ in the loss function updates in Eqn (\ref{eqn:lambda_update}) is chosen arbitrarily. These parameters can also be referred to as ``penalty parameters''. In our experiments, the performance over these parameters are rather smooth under assumption \ref{assumption:unique}. The hyperparameters $h_i^{(l)}$ are set to the same value on each level $l$. In the 2-level case, we perform a binary search on these hyperparameter, where each value is tested for 5 runs to see the average performance. For the 3-level case, we need to determine 2 hyperparameter values, and this is done by a fixed step size search performed iteratively on the two values. We tune the first one, each value is tested for 5 runs like the above, while fixing the second value, after that, we switch to the tuning of the second value and this process keeps iteratively. The parameters we used in the numerical experiments are
\begin{itemize}
    \item 2-Level game: $h_i^{(2)} = 0.2, 0.1, 0.06, 0.03, 0.01$; for network sizes $30^2, 50^2, 100^2, 200^2, 500^2$ respectively. With tuning range $[0, 0.5]$.
    \item 3-Level game: 
    
    For SH-BRD: $(h_i^{(2)}, h_j^{(3)}) = (0.65, 0.1)$, $(0.32, 0.03)$, $(0.2, 0.01)$, $(0.12, 0.006)$, $(0.04, 0.003)$; for network sizes $10^3, 20^3, 30^3, 50^3, 100^3$ respectively. With tuning range $[0, 0.5]^2$ and tuning step $0.002$.
    
    For HH-BRD: $h_i^{(sl_1)} = 0.7, 0.3, 0.21, 0.125, 0.063$ for network sizes $10^3, 20^3, 30^3, 50^3, 100^3$ respectively. With tuning range $[0, 0.5]$
\end{itemize}

Under the current parameter settings, we still haven't bring out the best performances of SH-BRD, and HH-BRD. In act, the performance gap between the current setting and the optimal setting won't be too large since the best response steps are well-posed. And even with their sub-optimal performances, we have seen their advantages over other algorithms.

In \citep{He2000ADMAdap}, the authors mentioned an adaptive method to generate the penalty parameter matrix $H$ which is generally not diagonal, that can speed up the problem solving steps. This will be an interesting direction to generalize our current algorithm when the best response functions become more ill-posed in the future. 

\subsection{CPU Specs:}
\begin{itemize}
    \item CPU: 6 cores, 12 threads, 2.60/4.50 GHz, 12MB Cache
    \item OS: Windows 10
    \item Software: Python 3.7
    \item RAM: 16 GB
\end{itemize}

\subsection{Results on Dense Networks}
\begin{table}[htbp]
	\begin{center}
		\begin{tabular}{p{0.02\textwidth} p{0.125\textwidth} p{0.125\textwidth} p{0.125\textwidth}}
			\toprule
			\textbf{Size} & \textbf{BRD} & \textbf{MS-BRD} & \textbf{SH-BRD}\\
			\midrule
			$30^2$ & (2.97$\pm$0.24)$\times 10^7$ & (9.91$\pm$0.81)$\times 10^5$
			& \textbf{(8.31$\pm$0.66)$\times 10^5$}
			\\
			
			\midrule
			$50^2$ & (2.41$\pm$0.22)$\times 10^8$ & (4.83$\pm$0.45)$\times 10^6$
			& \textbf{(3.27$\pm$0.30)$\times 10^6$}
			\\
			
			\midrule
			$100^2$ & (4.07$\pm$0.34)$\times 10^9$ & (4.07$\pm$0.34)$\times 10^7$
			& \textbf{(3.04$\pm$0.22)$\times 10^7$}
			\\
			
			\midrule
			$200^2$ & (6.66$\pm$0.62)$\times 10^{10}$ & (3.33$\pm$0.31)$\times 10^8$
			& \textbf{(2.44$\pm$0.17)$\times 10^8$}
			\\
			
			\midrule
			$500^2$ & (2.72$\pm$0.29)$\times 10^{12}$ & (5.53$\pm$0.49)$\times 10^9$
			& \textbf{(3.26$\pm$0.26)$\times 10^9$}
			\\
			
			\bottomrule
		\end{tabular}
	\end{center}
	
	\caption{Convergence and complexity (flops) comparison  with linear best response under multiple initialization, dense network.}\label{table:num_flops}
\end{table}

\begin{table}[htbp]
	\begin{center}
		\begin{tabular}{p{0.05\textwidth} p{0.11\textwidth} p{0.11\textwidth} p{0.11\textwidth}}
			
			\toprule
			\textbf{Size} & \textbf{BRD} & \textbf{MS-BRD} & \textbf{SH-BRD}\\
			\midrule
			$30^2$ & 0.99$\pm$0.03 & 0.49$\pm$0.02 & \textbf{0.24$\pm$0.01}
			\\
			
			\midrule
			$50^2$ & 22.80$\pm$0.05 & 1.83$\pm$0.06
			& \textbf{0.69$\pm$0.01}
			\\
			
			\midrule
			$100^2$ & 1351$\pm$7 & 13.28$\pm$0.26
			& \textbf{4.70$\pm$0.06}
			\\
			
			\midrule
			$200^2$ & $>18000$ & 159.9$\pm$0.8 & \textbf{58.07$\pm$0.42}
			\\
			
			\midrule
			$500^2$ & nan & 3505$\pm$54 & \textbf{1286$\pm$20}
			\\
			
			\bottomrule
		\end{tabular}
	\end{center}
	
	\caption{CPU times on a single machine on 2-Level games with general best response functions, dense network; All times are in seconds.}\label{table:L2_CPU}
\end{table}

\begin{table}[htbp]
	\begin{center}
		\begin{tabular}{p{0.03\textwidth} p{0.11\textwidth} p{0.11\textwidth} p{0.12\textwidth}}
			
			\toprule
			\textbf{Size} & \textbf{BRD} & \textbf{MS-BRD} & \textbf{SH-BRD}\\
			\midrule
			$30^2$ & 1.63$\pm$0.12 & 0.57$\pm$ 0.02 & \textbf{0.028$\pm$0.002}
			\\
			
			\midrule
			$50^2$ & 30.65$\pm$0.35 & 1.94$\pm$0.03
			& \textbf{0.051$\pm$0.003}
			\\
			
			\midrule
			$100^2$ & 1660$\pm$3 & 13.93$\pm$0.25
			& \textbf{0.33$\pm$0.02}
			\\
			
			\midrule
			$200^2$ & $>18000$ & 163.1$\pm$1.4 & \textbf{1.32$\pm$0.04}
			\\
			
			\midrule
			$500^2$ & nan & 3416$\pm$52 & \textbf{29.37$\pm$0.91}
			\\
			
			\bottomrule
		\end{tabular}
	\end{center}
	
	\caption{CPU times on a single machine for 2-Level, linear/nonlinear best-response games, dense network;  All times are in seconds.}\label{table:Lin+Nonlin_CPU}
\end{table}

\begin{table}[htbp]
	\begin{center}
		\begin{tabular}{p{0.02\textwidth} p{0.085\textwidth} p{0.085\textwidth} p{0.085\textwidth} p{0.085\textwidth}}
			
			\toprule
			\textbf{Size} 
			& \textbf{BRD} & \textbf{MS-BRD} & \textbf{SH-BRD} & \textbf{HH-BRD}\\
			\midrule
			$10^3$ & 1.25$\pm$0.02 & 0.39$\pm$0.01 & 0.57$\pm$0.02 & \textbf{0.34$\pm$0.01}
			\\
			
			\midrule
			$20^3$ & 617.3$\pm$4.7 & 2.85$\pm$0.07 & 4.50$\pm$0.06 & \textbf{2.56$\pm$0.06}
			\\
			
			\midrule
			$30^3$ & $>$ 18000 & 10.25$\pm$0.25 & 17.87$\pm$0.14
			& \textbf{9.53$\pm$0.09}
			\\
			
			\midrule
			$50^3$ & nan & 58.04$\pm$0.32 & 100.8$\pm$0.41
			& \textbf{51.86$\pm$0.24}
			\\
			
			\midrule
			$100^3$ & nan & 926.8$\pm$6.4 & 2131$\pm$11
			& \textbf{780.9$\pm$3.0}
			\\
			
			\bottomrule
		\end{tabular}
	\end{center}
	
	\caption{CPU times in seconds on a single machine on 3-Level, general best response games, dense network; All times are in seconds.}	\label{table:L3_CPU}
\end{table}

We can see that though the results in linear best response games are very different in sparse and dense networks, the results in games with non-linear best responses are quite similar in both types of networks. In games with linear best responses, the standard deviation results from different initialization. For the same game, one initial action profile's distance(measured in Euclidean norm) to the equilibrium point can be 20 times to the distance of another initial action profile. This results in different number of iterations of the algorithm before convergence. However, it only takes about 20\% more iterations for a ``distant'' initial action profile to reach convergence, which shows that these algorithms have good convergence property under Assumptions 1-3. In games with non-linear best responses, the standard deviations of CPU times are relatively small(around 1\%) compared to the mean values, and it shows that the performance of all algorithms are stable with a fixed initial action profile.

\section{Algorithm Performances and Network Sizes} \label{sec:net_size_perf}
In this part, we present some results that show the algorithms' performances with different network sizes in 2-level games. 

Figure \ref{fig:num_flops_per_iter} shows the number of flops per iteration for the three algorithms in $I \times M$ games where $I$ is the number of agents in each group and $M$  the number of groups in the network. Both Algorithms \ref{alg:hierarchical} and \ref{alg:hybrid} outperform Algorithm \ref{alg:flat_BRD}. Algorithm \ref{alg:hybrid} generally has lower complexity per iteration compared to Algorithm \ref{alg:hierarchical} since it has less input in every sub-problem and the number of sub-problems are similar in Algorithm \ref{alg:hierarchical} and \ref{alg:hybrid} when the group sizes are large. However, when  group sizes are small compared to the number of groups, Algorithm \ref{alg:hierarchical} and \ref{alg:hybrid} are similar per iteration.

\begin{figure}[htbp] 
	\centering
    \includegraphics[width=8cm]{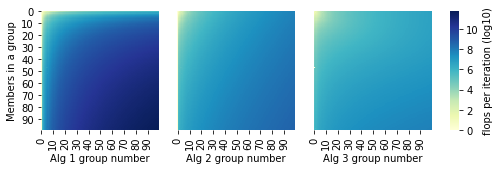} 
	\caption{Complexity per iteration for linear best response.}	\label{fig:num_flops_per_iter}
\end{figure}



%

\section{Reverse Engineer Multi-scale Structure} \label{appendix:reverse_engineer}

A question that naturally arises is whether sparsity in the network can be exploited when the multi-scale structure is not readily available. 
The utility function in Eqn (\ref{eqn:utility}) suggests that such reverse engineering is possible if the game satisfies:
\begin{enumerate}
    \item An agent is either connected to all agents in another group or not connected to any agent in that group; If so, we can create a set of possible group partitions.
    \item Based on the partition in the previous step, agents in one group have the same dependency on an agent in another group.
    \item Based on the partition, we can represent the groups' aggregate actions from their members' actions using some aggregate functions.
    \item Based on the partition, the original utility function of each agent can be separated to components on different levels, each component only based on the actions and dependencies on the corresponding level.
\end{enumerate}

An example of the first condition is shown in Figs. \ref{fig:ungrouped} and \ref{fig:grouped}. For the other conditions, the ``flattened'' utility functions used in Appendix \ref{appendix:numerical_data} are good examples.


\begin{figure}[htbp]
	\centering
	\begin{minipage}{0.2\textwidth}
		\centering
		\includegraphics[width=3.5cm]{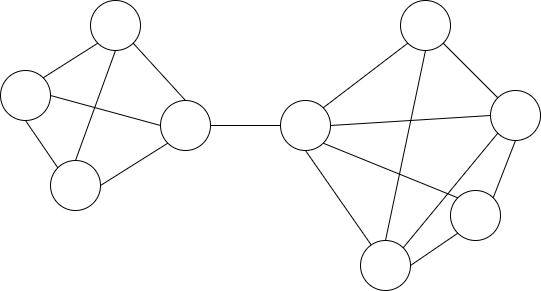} 
		\caption{Ungrouped.}\label{fig:ungrouped}
	\end{minipage}\hfill
	\begin{minipage}{0.2\textwidth}
		\centering
		\includegraphics[width=3.5cm]{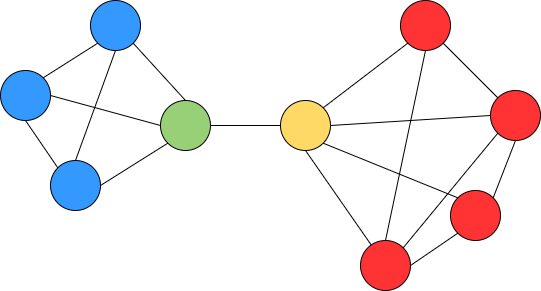} 
		\caption{Grouped.}\label{fig:grouped}
	\end{minipage}
\end{figure}

\section{Flow Charts of the Algorithms}

\begin{figure}[htbp]
	\centering
	\begin{minipage}{0.15\textwidth}
		\centering
		\includegraphics[width=2.8cm]{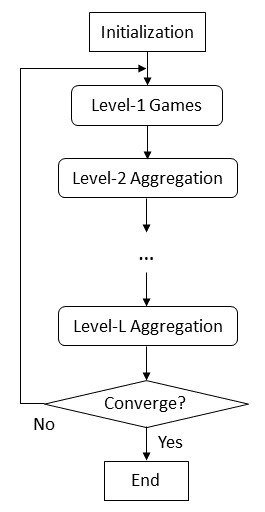} 
		\caption{MS-BRD}\label{fig:alg2_flow}
	\end{minipage}\hfill
	\begin{minipage}{0.15\textwidth}
		\centering
		\includegraphics[width=2.8cm]{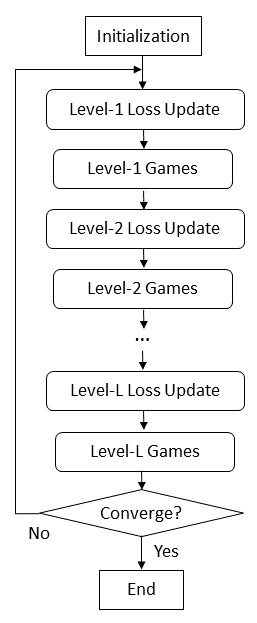} 
		\caption{SH-BRD}\label{fig:alg3_flow}
	\end{minipage}
	\begin{minipage}{0.15\textwidth}
		\centering
		\includegraphics[width=2.8cm]{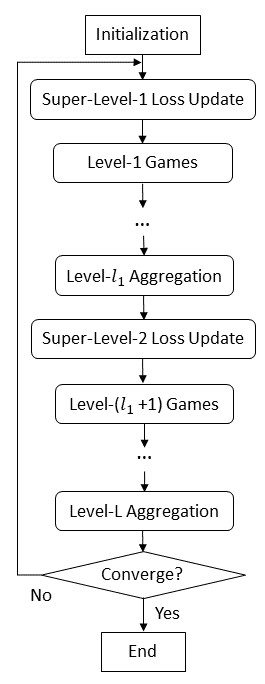} 
		\caption{HH-BRD}\label{fig:alg4_flow}
	\end{minipage}
\end{figure}

\end{document}